\theoremstyle{plain}
\newtheorem{theorem}{Theorem}%[section]
\newtheorem*{theorem*}{Theorem}
\newtheorem{proposition}[theorem]{Proposition}
\newtheorem{corollary}[theorem]{Corollary}
\newtheorem*{corollary*}{Corollary}
\newtheorem{lemma}[theorem]{Lemma}
\theoremstyle{remark}
\newtheorem{remark}[theorem]{Remark}
\newtheorem{example}[theorem]{Example}
\theoremstyle{definition}
\newtheorem{definition}[theorem]{Definition}
\newcommand{\R}{\mathcal{R}}
\newcommand{\field}[1][]{\mathbb{F}_{#1}}
\newcommand{\trace}{\operatorname{Tr}}
\newcommand{\Aut}[2][]{\operatorname{Aut}_{#1}\left(#2\right)}
\newcommand{\matrixring}[2]{\mathcal{M}_{#1}(#2)}
\newcommand{\tovector}{\mathfrak{v}}
\newcommand{\tomatrix}{\mathfrak{m}}
\newcommand{\transpose}[1]{#1^{\mathtt{T}}}
\newcommand{\kronecker}{\boxtimes}
\newcommand{\identity}[1]{\operatorname{id}_{#1}}
\newcommand{\norm}[2]{N_{#1}(#2)}
\newcommand{\lclm}[1]{\left[#1\right]_\ell}
\newcommand{\gcrd}[1]{\left(#1\right)_r}
\newcommand{\lcrm}[1]{\left[#1\right]_r}
\newcommand{\gcld}[1]{\left(#1\right)_\ell}
\newcommand{\lann}[2]{\mathsf{l}_{#1}(#2)}
\newcommand{\rann}[2]{\mathsf{r}_{#1}(#2)}
\newcommand{\rOre}[3]{[#2;#3]#1}
\newcommand{\lOre}[3]{#1[#2;#3]}
\title{Dual skew codes from annihilators: Transpose Hamming ring extensions}
\author[J. G\'omez-Torrecillas]{Jos\'e G\'omez-Torrecillas}
\email{gomezj@ugr.es}
\address{CITIC and Department of Algebra, University of Granada, Spain}
\author{F. J. Lobillo}
\email{jlobillo@ugr.es}  
\address{CITIC and Department of Algebra, University of Granada, Spain}
\author[G. Navarro]{Gabriel Navarro}
\email{gnavarro@ugr.es}
\address{CITIC and Department of Computer Sciences and AI, University of Granada, Spain}
\thanks{Research supported by grants  MTM2013-41992-P
from Ministerio de Econom\'{\i}a y Competitividad and FEDER, and MTM2016-78364-P from Agencia Estatal de Investigaci\'on and FEDER}
\begin{document}
\maketitle

\section*{Introduction}

Linear codes may be endowed with cyclic structures by means of skew polynomial rings. This is the case of Piret cyclic convolutional codes \cite{Piret:1976} and the subsequent generalizations and alternatives (see \cite{Roos:1979}, \cite{Gluesing/Schmale:2004}, \cite{Estrada/alt:2008}, \cite{LopezPermouth/Szabo:2013}, \cite{GLN:2016}, \cite{gln2016new}). Non commutative cyclic structures of this kind have been also considered for block linear codes  (\cite{Boucher/Ulmer:2009}, \cite{Boucher/alt:2007}, \cite{Boucher/Ulmer:2011}, \cite{Fogarty/Gluesing:2015}, \cite{Alhamadi/alt:2016}), and for linear codes over commutative rings (\cite{Boucher/alt:2008}, \cite{Jitman/alt:2012}, \cite{Ducoat/Oggier:2016}). 

A desirable property of any class of linear codes is to be stable under duals. This property has been already studied in several of the aforementioned references. A common feature of many of these approaches to duality is the presence of a suitable anti-isomorphism of rings that encodes, more or less explicitly, the transfer of the cyclic structure from the code to its dual. Our aim is to present a systematization of this method, besides some relevant examples where it successfully applies. 

The strategy is to establish a formal framework, called transpose Hamming ring extension, designed to derive that the dual of every cyclic code is cyclic. Cyclic codes will be, from an algebraic point of view, identified as left ideals of suitable (non-commutative) ring extensions of a given commutative ring $C$, well understood that such an ``identification'' has to be made explicit by an isomorphism of $C$--modules from the ring to $C^m$, where $m$ is the length of the $C$--linear code. The transposition will be an anti-isomorphism of rings which allows to transform annihilators into duals. Details are to be found in Section \ref{DualAnn}

In Section \ref{IdealCodes} we apply our general approach to left ideal convolutional codes in the sense of \cite{LopezPermouth/Szabo:2013}, extending to a more general setting, and improving, results from \cite{LopezPermouth/Szabo:2013} and \cite{GLN:2016} on the description of dual codes in this setting (Theorem \ref{dualidealcode}). The case of a simple word-ambient algebra is analyzed in detail (Theorem \ref{matrixmain}).  

Section \ref{sec:constacyclic} is devoted to dual codes of skew constacyclic codes over a commutative ring. Several results from \cite{Boucher/Ulmer:2009}, \cite{Boucher/alt:2008}, \cite{Jitman/alt:2012}, \cite{Ducoat/Oggier:2016} on these codes are covered by our general result (Theorem \ref{dualconstacyclic}).  

Finally, in Section \ref{DualsRS}, we compute (Theorem \ref{sRSdual}) the dual of a skew Reed Solomon code over a general field in the sense of \cite{GLNPGZ}, and, as a consequence, these codes are shown to be evaluation codes (Theorem \ref{sRSevaluation}).

\section{Transpose ring extensions and dual codes}\label{DualAnn}

Let $C$ be a commutative ring. 
A $C$--linear code of length $m$ is, by definition, a $C$--submodule $\mathcal{C}$ of $C^m$. The \emph{dual} of $\mathcal{C}$ is defined as
\[
\mathcal{C}^\bot = \{ w \in C^m : w\transpose{v} = 0, \forall v \in \mathcal{C} \}, 
\]
where $\transpose{M}$ denotes the transpose of a matrix $M$ with coefficients in $C$.

\begin{definition}
A  \emph{Hamming ring extension of rank $m$} is a three-tuple $(C,R,\tovector)$, where $R$ is a ring, $C$ is a commutative subring of $R$ and $\tovector : R \to C^m$ is an isomorphism of $C$--modules. Here, the $C$--module structure of $R$ is given by left multiplication. 
\end{definition}

Hamming ring extensions encode cyclic structures on some $C$--linear codes, according to the following definition. Precise examples will be examined later. 

\begin{definition} 
Every $f \in R$ leads to a $C$--linear code $\mathcal{C} = \tovector(Rf)$. We say then that $\mathcal{C}$ is the \emph{$(C,R,\tovector)$--cyclic code generated by $f$}.  We will also say that $\mathcal{C}$ is \emph{$R$--cyclic}. 
\end{definition}

Given an $R$--cyclic code $\mathcal{C} = \tovector(Rf)$, we have the multiplication map $\cdot f : R \to R$. There is a unique square matrix $M_R(f)  \in \matrixring{m}{C}$ making commute the following diagram of $C$--module morphisms 
\[
\xymatrix{
R \ar[r]^{\cdot f} \ar[d]_{\tovector} & **[r] R \ar[d]^{\tovector} \\
C^m \ar[r]_{\cdot M_R(f)} & **[r] C^m,
}
\]
so \[\mathcal{C} = \operatorname{im}(\cdot M_R(f)).\] 

A straightforward computation shows that 
\begin{equation}\label{dualker}
\mathcal{C}^\bot = \operatorname{ker}(\cdot \transpose{M_R(f)}).
\end{equation}
The map $M_R : R  \to \matrixring{m}{C}$ sending $f$ onto $M_R(f)$ is an injective homomorphism of rings. 

\begin{definition}
Two Haming ring extensions $(C,R,\tovector)$ and $(C,\widehat{R},\widehat{\tovector})$ are said to be \emph{transposed} if there exist an anti-isomorphism of rings $\Theta : R \to \widehat{R}$ such that the diagram
\begin{equation*}\label{eq:TC}
\xymatrix{
R \ar[r]^{M_R} \ar[d]^{\Theta} & **[r] \matrixring{m}{C} \ar[d]^{\transpose{(-)}} \\
\widehat{R} \ar[r]^{M_{\widehat{R}}} & **[r] \matrixring{m}{C}
},
\end{equation*}
is commutative. Equivalently if,
\begin{equation*}
M_{\widehat{R}}(\Theta(f)) = \transpose{M_R(f)}, \ \forall f \in R. 
\end{equation*}
We say also that $\Theta$ is a \emph{transposition} from $(C,R,\tovector)$ to $(C,\widehat{R},\widehat{\tovector})$. 
\end{definition}

Given a subset $X$ of a ring $S$, the \emph{left annihilator} and the \emph{right annihilator} of $X$ are defined, respectively, by 
\[
\lann{S}{X} = \{ s \in S : sx = 0, \forall x \in X \}, \quad \rann{S}{X} = \{ s \in S : xs = 0, \forall x \in X \}.
\]
The first of these sets is a left ideal of $S$, while the second one is a right ideal.

\begin{theorem}\label{Dualcyclic}
Let $(C,R,\tovector)$ and $(C,\widehat{R},\widehat{\tovector})$ be transpose Hamming ring extensions with anti-isomorphism $\Theta : R \to \widehat{R}$. Let \(f,h \in R\) and  \(\mathcal{C} = \tovector(Rf)\). Then \(hR = \rann{R}{Rf}\) if and only if \(\mathcal{C}^\bot = \widehat{\tovector}(\widehat{R}\Theta(h))\). In this case, $\mathcal{C}^\bot$ is generated by the rows of $\transpose{M_R(h)}$.
\end{theorem}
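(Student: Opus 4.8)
The plan is to transport the equality $hR = \rann{R}{Rf}$ across the anti-isomorphism $\Theta$ and then across $\widehat{\tovector}$, matching it step by step with the equality $\mathcal{C}^\bot = \widehat{\tovector}(\widehat{R}\Theta(h))$. The first move is to rewrite the dual code on the side of $\widehat{R}$. By \eqref{dualker}, $\mathcal{C}^\bot = \operatorname{ker}(\cdot \transpose{M_R(f)})$, and the transposition property gives $\transpose{M_R(f)} = M_{\widehat{R}}(\Theta(f))$. Feeding this into the commutative diagram defining $M_{\widehat{R}}$ (the analogue for $(C,\widehat{R},\widehat{\tovector})$ of the diagram defining $M_R$), the isomorphism $\widehat{\tovector}$ identifies $\operatorname{ker}(\cdot M_{\widehat{R}}(\Theta(f)))$ with $\operatorname{ker}\bigl(\cdot\,\Theta(f)\colon \widehat{R}\to\widehat{R}\bigr) = \lann{\widehat{R}}{\Theta(f)}$. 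Hence $\mathcal{C}^\bot = \widehat{\tovector}\bigl(\lann{\widehat{R}}{\Theta(f)}\bigr)$, and since $\widehat{\tovector}$ is bijective the theorem reduces to the claim that $hR = \rann{R}{Rf}$ if and only if $\widehat{R}\Theta(h) = \lann{\widehat{R}}{\Theta(f)}$.

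Next I would apply $\Theta$ to the two right-hand sides. Since $1\in R$, one checks at once that $\rann{R}{Rf} = \{g\in R : fg = 0\}$. As $\Theta$ is a bijective anti-homomorphism with $\Theta(xy) = \Theta(y)\Theta(x)$, it carries the right ideal $hR$ onto $\widehat{R}\Theta(h)$, and it carries $\{g : fg = 0\}$ onto $\{\widehat{g}\in\widehat{R} : \widehat{g}\,\Theta(f) = 0\} = \lann{\widehat{R}}{\Theta(f)}$ (injectivity of $\Theta$ is what is needed for one of the two inclusions). Applying the bijection $\Theta$ therefore yields $hR = \rann{R}{Rf}$ if and only if $\widehat{R}\Theta(h) = \lann{\widehat{R}}{\Theta(f)}$, which combined with the first paragraph is exactly the asserted equivalence. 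Finally, when these equalities hold, $\mathcal{C}^\bot = \widehat{\tovector}(\widehat{R}\Theta(h)) = \operatorname{im}(\cdot M_{\widehat{R}}(\Theta(h))) = \operatorname{im}(\cdot \transpose{M_R(h)})$, and the image of right multiplication by a matrix is the $C$-span of its rows, which gives the last assertion.

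I do not expect a genuine structural obstacle: everything is obtained by transporting equalities of sets along the bijections $\Theta$ and $\widehat{\tovector}$. The only points requiring care are bookkeeping ones, namely that $\Theta$ interchanges left and right, so that a right ideal $hR$ and a right annihilator $\rann{R}{Rf}$ in $R$ become a left ideal and a left annihilator in $\widehat{R}$, and that the diagram defining $M_{\widehat{R}}$ must be used on the correct side, so that $\operatorname{ker}(\cdot M_{\widehat{R}}(\Theta(f)))$ really corresponds to the \emph{left} annihilator of $\Theta(f)$ and not to its right annihilator.
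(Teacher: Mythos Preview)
Your proof is correct and uses the same core ingredients as the paper: the identity $\mathcal{C}^\bot = \operatorname{ker}(\cdot \transpose{M_R(f)}) = \operatorname{ker}(\cdot M_{\widehat{R}}(\Theta(f)))$ together with the fact that $\Theta$ carries $\rann{R}{Rf}$ bijectively onto $\lann{\widehat{R}}{\Theta(f)}$ and $hR$ onto $\widehat{R}\Theta(h)$. Your organization is in fact slightly more economical than the paper's, since by first proving the unconditional equality $\mathcal{C}^\bot = \widehat{\tovector}\bigl(\lann{\widehat{R}}{\Theta(f)}\bigr)$ and then invoking bijectivity of $\Theta$ and $\widehat{\tovector}$ you obtain both implications at once, whereas the paper establishes the forward direction via an exact-sequence diagram and the converse by a separate element chase.
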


\begin{proof}
Assume $\rann{R}{Rf} = hR$. Then we get $\lann{\widehat{R}}{\Theta(f)\widehat{R}} = \widehat{R}\Theta(h)$. This implies that the top row of the commutative diagram
\[
\xymatrix@C=50pt{\widehat{R} \ar^-{\cdot \Theta(h)}[r] \ar_-{\widehat{\tovector}}[d]& \widehat{R} \ar^-{\cdot \Theta(f)}[r] \ar_-{\widehat{\tovector}}[d] & \widehat{R} \ar_-{\widehat{\tovector}}[d] \\
C^m \ar_-{\cdot M_{\widehat{R}}(\Theta(h))}[r] & C^m \ar_-{\cdot M_{\widehat{R}}(\Theta(f))}[r] & C^m}
\]
is exact, so is the bottom row. Therefore,
\[
\widehat{\tovector}(\widehat{R}\Theta(h)) = \operatorname{im} (\cdot M_{\widehat{R}}(\Theta(h)) = \operatorname{ker}(\cdot M_{\widehat{R}}(\Theta(f)) = \operatorname{ker}(\cdot \transpose{M_R(f)}) = \mathcal{C}^\bot,
\]
where the last equality is \eqref{dualker}. 

Conversely, assume \(\mathcal{C}^\bot = \widehat{\tovector}(\widehat{R}\Theta(h))\). Then
\[
\mathcal{C}^\bot = \operatorname{im}(\cdot M_{\widehat{R}}(\Theta(h))) = \operatorname{im}(\cdot \transpose{M_{R}(h)}),
\]
and therefore \(M_R(f) M_R(h) = 0\). So \(fh = 0\) and \(hR \subseteq \rann{R}{Rf}\). Let \(h' \in \rann{R}{Rf}\), i.e. \(fh' = 0\). Then \(M_R(f) M_R(h') = 0\) and so
\[
\transpose{M_R(h')} \transpose{M_R(f)} = 0.
\]
Since \(\transpose{M_R(h')} = M_{\widehat{R}}(\Theta(h'))\) and \(\mathcal{C}^\bot = \ker(\cdot \transpose{M_R(f)})\), it follows that \(\widehat{\tovector}(\widehat{R} \Theta(h')) = \operatorname{im}(\cdot M_{\widehat{R}}(\Theta(h'))) \subseteq \mathcal{C}^\bot = \widehat{\tovector}(\widehat{R}\Theta(h))\). Hence \(\Theta(h') \in \widehat{R}\Theta(h)\), which implies \(h' \in h R\). Therefore \(\rann{R}{Rf} \subseteq hR\) and the equality holds. 

Finally, the equality $\widehat{\tovector}(\widehat{R}\Theta(h)) = \operatorname{im}(\cdot \transpose{M_R(h)})$, implies that \(\mathcal{C}^\bot\) is generated by the rows of \(\transpose{M_R(h)}\). 
\end{proof}

\begin{corollary}\label{annann}
If $\mathcal{C}^{\bot \bot} = \mathcal{C}$ and $hR = \rann{R}{Rf}$, then $Rf = \lann{R}{hR}$. 
\end{corollary}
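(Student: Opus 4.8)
The plan is to apply Theorem \ref{Dualcyclic} twice, once to $\mathcal{C}$ and once to $\mathcal{C}^\bot$, which is legitimate because being transposed is a symmetric relation between Hamming ring extensions. So the first thing I would check is that the inverse anti-isomorphism $\Theta^{-1} : \widehat{R} \to R$ is a transposition from $(C,\widehat{R},\widehat{\tovector})$ to $(C,R,\tovector)$. Given $g \in \widehat{R}$, write $f = \Theta^{-1}(g)$; transposing both sides of the defining identity $M_{\widehat{R}}(\Theta(f)) = \transpose{M_R(f)}$ and using that transposition of matrices is an involution yields $\transpose{M_{\widehat{R}}(g)} = M_R(f) = M_R(\Theta^{-1}(g))$, which is precisely the transposition condition for $\Theta^{-1}$. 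Consequently Theorem \ref{Dualcyclic} applies verbatim after interchanging $(C,R,\tovector)$ with $(C,\widehat{R},\widehat{\tovector})$ and replacing $\Theta$ by $\Theta^{-1}$.

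With this in hand I would run the theorem twice. Applying Theorem \ref{Dualcyclic} as stated to $f$ and $h$, the hypothesis $hR = \rann{R}{Rf}$ gives $\mathcal{C}^\bot = \widehat{\tovector}(\widehat{R}\Theta(h))$; that is, $\mathcal{C}^\bot$ is the $\widehat{R}$--cyclic code generated by $\Theta(h)$. Now I would take the dual again. Since $\mathcal{C}^{\bot\bot} = \mathcal{C} = \tovector(Rf)$ and $f = \Theta^{-1}(\Theta(f))$, we have $(\mathcal{C}^\bot)^\bot = \tovector\bigl(R\,\Theta^{-1}(\Theta(f))\bigr)$. Feeding this into the converse implication of Theorem \ref{Dualcyclic} — in the interchanged setting of the first paragraph, with $\Theta(h)$ playing the role of the generator and $\Theta(f)$ the role of the annihilator generator — forces $\Theta(f)\widehat{R} = \rann{\widehat{R}}{\widehat{R}\Theta(h)}$.

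It remains to pull this identity back through $\Theta$. Since $\Theta$ is a bijective anti-homomorphism, $\Theta(Rf) = \Theta(f)\widehat{R}$, and for any subset $X \subseteq R$ one has $\Theta(\lann{R}{X}) = \rann{\widehat{R}}{\Theta(X)}$ (an element $s$ kills $X$ on the left exactly when $\Theta(s)$ kills $\Theta(X)$ on the right); taking $X = hR$ and noting $\Theta(hR) = \widehat{R}\Theta(h)$ gives $\Theta(\lann{R}{hR}) = \rann{\widehat{R}}{\widehat{R}\Theta(h)}$. Combining with the previous paragraph, $\Theta(Rf) = \Theta(\lann{R}{hR})$, so injectivity of $\Theta$ yields $Rf = \lann{R}{hR}$. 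The only step needing genuine thought is the symmetry observation of the first paragraph — recognizing that the transpose relation lets Theorem \ref{Dualcyclic} be applied in the reverse direction; everything else is routine bookkeeping with annihilators under an anti-isomorphism.
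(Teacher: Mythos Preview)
Your proof is correct and follows essentially the same route as the paper: apply Theorem \ref{Dualcyclic} once to get $\mathcal{C}^\bot = \widehat{\tovector}(\widehat{R}\Theta(h))$, observe that $\Theta^{-1}$ is again a transposition, apply the converse implication of Theorem \ref{Dualcyclic} in the interchanged setting to obtain $\Theta(f)\widehat{R} = \rann{\widehat{R}}{\widehat{R}\Theta(h)}$, and pull back through $\Theta$. You have simply written out in full the two steps the paper leaves implicit (that $\Theta^{-1}$ is a transposition, and why $\rann{\widehat{R}}{\widehat{R}\Theta(h)} = \Theta(f)\widehat{R}$ yields $\lann{R}{hR} = Rf$).
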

\begin{proof}
We have $\mathcal{C}^\bot = \widehat{\tovector}(\widehat{R}\Theta(h))$. Let $\widehat{\Theta} : \widehat{R} \to R$ be the inverse of $\Theta$. Since 
\[
\mathcal{C}^{\bot \bot} = \mathcal{C} = \tovector(Rf) = \tovector(R\widehat{\Theta}\Theta(f)),
\]
we get from Theorem \ref{Dualcyclic}, applied to the transposition $\widehat{\Theta} : \widehat{R} \to R$, that $\rann{\widehat{R}}{\widehat{R}\Theta (h)} = \Theta(f)\widehat{R}$. But this implies that $\lann{R}{hR} = Rf$.
\end{proof}

\begin{remark}\label{pidFrobenius}
It is well known that, if $C$ is a field, then $\mathcal{C}^{\bot \bot} = \mathcal{C}$ for every $C$--linear code $\mathcal{C}$. This equality also holds  true in more general situations of interest. This is the case, for instance, if $C$ is a principal ideal domain and $\mathcal{C}$ is a direct summand of $C^m$, or $\mathcal{C}$ is any $C$--linear code over a Frobenius ring $C$. 
\end{remark}

\section{Dual left ideal convolutional codes}\label{IdealCodes}

The method described in Section \ref{DualAnn} is abstracted from the study of the dual of a group convolutional code developed in \cite[Section 4]{LopezPermouth/Szabo:2013}. The ideas from \cite[Section 4]{LopezPermouth/Szabo:2013} were adapted in \cite{GLN:2016} to $\field$--linear cyclic convolutional codes when the word ambient algebra is a matrix algebra $\matrixring{n}{\field}$. The aim of this section is to extend the results on duality from \cite{GLN:2016} to the matrix $\field$--algebra $\matrixring{n}{\mathbb{K}}$, where $\mathbb{K}$ is a finite field extension of $\field$. To this end, we first work, in the spirit of Section \ref{DualAnn}, in a more general setting, and then apply the general results to the more concrete situation. 

\subsection{Left ideal convolutional codes of automorphism type.}
Let $A$ be a ring and $\sigma : A \to A$ a ring automorphism. The skew \emph{right} polynomial ring $\rOre{A}{z}{\sigma}$ is defined as the free right $A$--module with basis $\{ z^i : i \in \mathbb{N} \}$ with the multiplication determined by the rules $z^iz^j = z^{i+j}$, and $az = z\sigma(a)$, for all $i, j \in \mathbb{N}$ and $a \in A$. If $A$ is a finite algebra over a finite field $\field$, and $\sigma$ is an $\field$--automorphism, then $R = \rOre{A}{z}{\sigma}$ becomes the sentence ambient algebra for some cyclic convolutional codes as follows. First, observe that $\field{}[z]$, the commutative polynomial ring in the variable $z$, is a subring of $R$. Moreover, each $\field$--basis $B = \{ v_0, v_1, \dots, v_{m-1} \}$ of $A$ leads to the associated coordinate map $\tovector : A \to \field^m$. Since $B$ becomes a basis of $R$ as a (left) $\field{}[z]$--module, we get that $\tovector$ extends to an $\field{}[z]$--module isomorphism $\tovector : R \to \field{}[z]^m$. That is, $(\field{}[z], R, \tovector)$ becomes a Hamming ring extension of rank $m$. 

Convolutional codes may be understood as $\field{}[z]$--submodules of $\field{}[z]^m$, so, they may be considered as $\field{}[z]$--linear codes. The variable $z$ is interpreted as the delay operator \cite{Forney:1970}.  Convolutional codes  are often required to be, in addition, direct summands of $\field{}[z]^m$. 

\begin{definition}\cite{LopezPermouth/Szabo:2013}\label{licc}
A direct summand $\field{}[z]$--submodule $\mathcal{C}$ of $\field{}[z]^m$ is said to be a \emph{left ideal convolutional code} if there exits a left ideal $I$ of $\rOre{A}{z}{\sigma}$ such that $\mathcal{C} = \tovector(I)$.  %When $I$ is a left ideal, the code $\mathcal{C}$ is said to be a \emph{principal ideal code}.
\end{definition}

Our next aim is to prove that, under suitable conditions, it is possible to construct  a transpose Hamming ring extension to $R$. For $a \in A$, we use the notation $M_a = M_R(a)$. Observe that $M_a \in \matrixring{m}{\field}$. On the other hand, for every $\field$--linear map $\lambda : A \to A$,   let $M_{\lambda} \in \matrixring{m}{\field}$ be the unique matrix such that the diagram
\[
\xymatrix{A \ar^{\lambda}[r] \ar^{\tovector}[d] &A  \ar^{\tovector}[d] \\
\field^m \ar^{\cdot M_{\lambda}}[r]  & \field^m}
\]
is commutative.  A straightforward computation shows that, if $\lambda$ is an algebra map, then, for all $a \in A$, 
\begin{equation}\label{eq:alambda}
M_a M_\lambda = M_\lambda M_{\lambda(a)}
\end{equation}

\begin{proposition}\label{descriptionofM_Rsigma}
For every \(f = \sum_kz^kf_k \in R\) we have:
\[M_R(f) = \sum_k z^k M_{\sigma}^k M_{f_k}.\]
\end{proposition}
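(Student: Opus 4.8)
The strategy is to exploit that $M_R : R \to \matrixring{m}{\field{}[z]}$ is a ring homomorphism, which reduces the claim to computing $M_R$ on the two kinds of factors occurring in the canonical expression $f = \sum_k z^k f_k$ of an element of $R = \rOre{A}{z}{\sigma}$, namely the indeterminate $z$ and the coefficients $f_k \in A$. I would first record that for $a \in A$ the matrix $M_a = M_R(a)$ already has all its entries in $\field$: for each $v_i \in B$ we have $v_i a \in A$, so the $i$-th row $\tovector(v_i a)$ of $M_a$ is a vector over $\field$. Hence $M_{f_k}$ in the statement is precisely $M_R(f_k)$, legitimately regarded inside $\matrixring{m}{\field} \subseteq \matrixring{m}{\field{}[z]}$.

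The one genuine computation is the identity $M_R(z) = z M_\sigma$. To obtain it I would evaluate the map $\cdot z : R \to R$ on a basis vector $v_i$: by the defining skew rule $v_i z = z\sigma(v_i)$ of $\rOre{A}{z}{\sigma}$, the $i$-th row of $M_R(z)$ equals $\tovector(z\sigma(v_i))$. Writing $\sigma(v_i) = \sum_j (M_\sigma)_{ij} v_j$, pulling the scalars $(M_\sigma)_{ij} \in \field$ past $z$, and using the $\field{}[z]$-linearity of $\tovector$ with $\tovector(v_j) = e_j$ (so that $\tovector(z v_j) = z e_j$), that row becomes $z\,(e_i M_\sigma)$, i.e.\ $z$ times the $i$-th row of $M_\sigma$. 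This gives $M_R(z) = z M_\sigma$ in $\matrixring{m}{\field{}[z]}$.

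It then remains to assemble the identity. By additivity and multiplicativity of $M_R$, together with $M_R(z^k) = M_R(z)^k$, one gets
\[
M_R(f) = \sum_k M_R(z^k f_k) = \sum_k M_R(z)^k\, M_R(f_k) = \sum_k (z M_\sigma)^k M_{f_k} = \sum_k z^k M_\sigma^k M_{f_k},
\]
the last equality because the scalar $z$ is central, so $(z M_\sigma)^k = z^k M_\sigma^k$. The main, and essentially the only, obstacle is the bookkeeping in the step $M_R(z) = z M_\sigma$: one must keep the row-vector conventions in the definitions of $M_R(-)$ and $M_\sigma$ aligned with the left $\field{}[z]$-module structure on $R$ (given by multiplication in $R$) and with the skew relation $a z = z\sigma(a)$. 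Once that point is settled, everything else is formal.
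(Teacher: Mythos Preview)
Your proof is correct and takes a slightly different route from the paper's. The paper verifies the identity by a direct computation: for an arbitrary $g = \sum_l z^l g_l \in R$ it checks that $\tovector(g)\bigl(\sum_k z^k M_\sigma^k M_{f_k}\bigr) = \tovector(gf)$, expanding both sides and using the skew rule $g_l z^k = z^k \sigma^k(g_l)$ together with the $\field{}[z]$--linearity of $\tovector$. You instead invoke the ring homomorphism property of $M_R$ (which the paper has already recorded) to reduce everything to the single atomic computation $M_R(z) = zM_\sigma$, and then assemble via multiplicativity and centrality of $z$. Your approach is a bit more structural: it isolates exactly where $M_\sigma$ enters (namely, through the one skew relation $v_i z = z\sigma(v_i)$), whereas the paper's direct expansion hides this behind the global computation. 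Conversely, the paper's approach is self-contained in that it does not appeal to the multiplicativity of $M_R$ as a black box; it effectively re-proves that multiplicativity for the special factorization $f = \sum_k z^k f_k$. Both arguments are of comparable length and difficulty.
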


\begin{proof}
This proof is adapted and simplified from \cite[Proposition 4.7]{LopezPermouth/Szabo:2013} and \cite[Proposition 2.4]{GLN:2016}.
Since \(\tovector : R \to \field{}[z]^m\) is an \(\field{}[z] \)--linear map, we have, for every $g \in R$, 
\[
\begin{split}
\tovector(g) \textstyle\sum_k z^k M_{\sigma}^k M_{f_k} &= \tovector\left(\textstyle\sum_l z^l g_l\right) \textstyle\sum_k z^k M_{\sigma}^k M_{f_k} \\
&= \textstyle\sum_{l,k} z^l z^k \tovector(g_l) M_{\sigma^k} M_{f_k} \\
&= \textstyle\sum_{l,k} z^l z^k \tovector(\sigma^k(g_l)f_k) \\
&= \tovector\left( \textstyle\sum_l z^l \textstyle\sum_k z^k \sigma^k(g_l)f_k \right) \\
&= \tovector\left( \textstyle\sum_l z^l \textstyle\sum_k g_l z^k f_k \right) \\
&= \tovector(gf).
\end{split}
\]
\end{proof}

Consider now an involution $\theta : A \to A$ (that is, an anti-algebra homomorphism such that $\theta^2 = \identity{A}$). Let us denote $\widehat{\sigma} = \theta \sigma^{-1} \theta$. Define the map 
\begin{equation}\label{antimapOre}
\begin{split}
\Theta : \rOre{A}{z}{\sigma} &\to \rOre{A}{z}{\widehat{\sigma}} \\
\textstyle\sum_k z^k a_k &\mapsto \Theta(\textstyle\sum_k z^k a_k) = \sum_k z^k \theta \sigma^{-k}(a_k)
\end{split}
\end{equation}
Note that, as left $\field{}[z]$--modules, both rings are equal, and, once the basis $B$ of $A$ is fixed, we have that $(\field{}[z], \rOre{A}{z}{\widehat{\sigma}}, \tovector)$ is a Hamming ring extension.

\begin{proposition}\label{FzTranspose}
The map \(\Theta : \rOre{A}{z}{\sigma} \to \rOre{A}{z}{\widehat{\sigma}} \) is an anti-isomorphism of $\field$--algebras. Moreover, if  
\(M_{\widehat{\sigma}} = \transpose{M_\sigma}\) and \(M_{\theta(a)} = \transpose{M_a}\) for all \(a \in A\), then $(\field{}[z], \rOre{A}{z}{\sigma}, \tovector)$ and $(\field{}[z], \rOre{A}{z}{\widehat{\sigma}}, \tovector)$ are transpose Hamming ring extensions via $\Theta$.
\end{proposition}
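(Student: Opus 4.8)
The plan is to establish the two assertions separately, each by reducing to computations with monomials $z^k a$ ($a \in A$) and the coordinate matrices introduced just before Proposition~\ref{descriptionofM_Rsigma}. Writing $\widehat{R} := \rOre{A}{z}{\widehat{\sigma}}$, I would first record that, since $\theta^2 = \identity{A}$, one has $\widehat{\sigma}^{\,k} = \theta\sigma^{-k}\theta$ for all $k$; in particular $\widehat{\sigma}$ is an $\field$--algebra automorphism of $A$ (a composite of two anti-automorphisms and one automorphism), so $\widehat{R}$ is a well-defined skew polynomial ring and $(\field{}[z],\widehat{R},\tovector)$ a Hamming ring extension, as noted before the statement. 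I would also observe that for $a \in A$ the matrix $M_a$ is just the matrix of right multiplication by $a$ on $A$, hence independent of the skew structure, so the notation is unambiguous for both $R$ and $\widehat{R}$.

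For the first assertion, $\Theta$ is $\field$--linear because each $\theta\sigma^{-k}$ is, and it carries the homogeneous component $z^k A$ of $R$ onto the homogeneous component $z^k A$ of $\widehat{R}$ via the $\field$--linear bijection $\theta\sigma^{-k}$ of $A$, hence is bijective. The only point needing work is anti-multiplicativity, and by additivity it suffices to take $f = z^k a$, $g = z^l b$. Inside $R$ one has $fg = z^{k+l}\sigma^l(a)b$, so
\[
\Theta(fg) = z^{k+l}\,\theta\sigma^{-(k+l)}\!\big(\sigma^l(a)b\big) = z^{k+l}\,\theta\sigma^{-(k+l)}(b)\,\theta\sigma^{-k}(a),
\]
using that $\theta$ reverses products. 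On the other hand $\Theta(g)\Theta(f) = \big(z^l\theta\sigma^{-l}(b)\big)\big(z^k\theta\sigma^{-k}(a)\big)$ is to be computed inside $\widehat{R}$, where the rule is $c z = z\widehat{\sigma}(c)$; moving $z^k$ left past $\theta\sigma^{-l}(b)$ turns it into $\widehat{\sigma}^{\,k}\big(\theta\sigma^{-l}(b)\big) = \theta\sigma^{-k}\theta\theta\sigma^{-l}(b) = \theta\sigma^{-(k+l)}(b)$, so $\Theta(g)\Theta(f) = z^{k+l}\,\theta\sigma^{-(k+l)}(b)\,\theta\sigma^{-k}(a)$, which matches $\Theta(fg)$.

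For the second assertion I would assume $M_{\widehat{\sigma}} = \transpose{M_\sigma}$ and $M_{\theta(a)} = \transpose{M_a}$ for all $a \in A$ and verify $M_{\widehat{R}}(\Theta(f)) = \transpose{M_R(f)}$ for $f = \sum_k z^k f_k$. Proposition~\ref{descriptionofM_Rsigma} gives $M_R(f) = \sum_k z^k M_\sigma^{\,k}M_{f_k}$; transposing term by term (the $z^k$ are central scalars) and substituting the hypotheses yields $\transpose{M_R(f)} = \sum_k z^k M_{\theta(f_k)}M_{\widehat{\sigma}}^{\,k}$. The same Proposition applied in $\widehat{R}$ to $\Theta(f) = \sum_k z^k \theta\sigma^{-k}(f_k)$ gives $M_{\widehat{R}}(\Theta(f)) = \sum_k z^k M_{\widehat{\sigma}}^{\,k}M_{\theta\sigma^{-k}(f_k)}$, so it remains to match the two term by term, i.e.\ to show $M_{\theta(f_k)}M_{\widehat{\sigma}}^{\,k} = M_{\widehat{\sigma}}^{\,k}M_{\theta\sigma^{-k}(f_k)}$. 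Iterating \eqref{eq:alambda} with the algebra map $\lambda = \widehat{\sigma}$ gives $M_aM_{\widehat{\sigma}}^{\,k} = M_{\widehat{\sigma}}^{\,k}M_{\widehat{\sigma}^{\,k}(a)}$ for every $a$, and with $a = \theta(f_k)$ one has $\widehat{\sigma}^{\,k}(\theta(f_k)) = \theta\sigma^{-k}\theta\theta(f_k) = \theta\sigma^{-k}(f_k)$, which closes the argument.

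I do not expect any serious obstacle: the proof is bookkeeping. The one place to be careful is keeping straight which multiplication is in force at each step — the $\sigma$--rule when simplifying $fg$ in $R$ versus the $\widehat{\sigma}$--rule when expanding $\Theta(g)\Theta(f)$ in $\widehat{R}$ — and systematically using the identity $\widehat{\sigma}^{\,k} = \theta\sigma^{-k}\theta$, which both reconciles the two monomial computations in the first part and, in the second, is precisely what converts \eqref{eq:alambda} into the commutation relation needed.
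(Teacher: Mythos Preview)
Your proof is correct and follows essentially the same approach as the paper's: both use Proposition~\ref{descriptionofM_Rsigma} and \eqref{eq:alambda} together with the hypotheses to verify the transposition identity, and both reduce anti-multiplicativity to a monomial check. The only cosmetic differences are that the paper verifies anti-multiplicativity on the generators $a\in A$ and $z$ rather than on arbitrary monomials, and in the matrix computation it applies \eqref{eq:alambda} (with $\lambda=\sigma$) \emph{before} transposing rather than (with $\lambda=\widehat{\sigma}$) after.
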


\begin{proof}
Since \(\theta\) and \(\sigma\) are \(\field{}\)--linear and bijective, \(\Theta\) is also \(\field{}\)--linear and bijective. Let us check that $\Theta$ is anti-multiplicative. Obviously, the restriction of $\Theta$ to $A$ is just $\theta$. Since $\rOre{A}{z}{\sigma}$ is generated as an $\field$--algebra by $A$ and $z$, it is enough to consider $a \in A$ and compute
\[
\Theta(a)\Theta(z) = \theta(a) z = z \theta \sigma^{-1} \theta \theta(a) = z \theta \sigma^{-1}(a) = \Theta(za)
\]
and
\[
\Theta(az) = \Theta(z \sigma(a)) = z \theta\sigma^{-1}\sigma(a) = z \theta(a) = \Theta(z) \Theta(a).
\]
In order to check that $\Theta$ is a transposition, let us compute
\[
\begin{split}
\transpose{M_R(f)}
&= \transpose{\left[ \sum_k z^k M_{\sigma}^k M_{f_k} \right]} \quad \text{by Proposition \ref{descriptionofM_Rsigma}}\\
&= \transpose{\left[ \sum_k z^k M_{\sigma^{-k}(f_k)} M_{\sigma}^k \right]} \quad \text{by \eqref{eq:alambda}}\\
&= \sum_k z^k \transpose{(M_{\sigma}^k)} \transpose{(M_{\sigma^{-k}(f_k)})} \\
&= \sum_k z^k (\transpose{M_{\sigma}})^k M_{\theta{\sigma^{-k}(f_k)}} \quad \text{by assumption}\\
&= \sum_k z^k (M_{\widehat{\sigma}})^k M_{\theta{\sigma^{-k}(f_k)}} \quad \text{by assumption}\\
&= M_{\widehat{R}}(\Theta({f})) \quad \text{by Proposition \ref{descriptionofM_Rsigma},}
\end{split}
\]
as desired. 
\end{proof}

\begin{remark}
If $A = \field G$ is the group algebra of a finite group $G$, and $\theta$ is the involution defined on $G$ as $\theta (g) = g^{-1}$, then Proposition \ref{FzTranspose} gives \cite[Proposition 4.18]{LopezPermouth/Szabo:2013} in the case where the skew derivation used in \cite{LopezPermouth/Szabo:2013} is zero. 
\end{remark}

\begin{theorem}\label{dualidealcode}
Assume that  
\(M_{\widehat{\sigma}} = \transpose{M_\sigma}\) and \(M_{\theta(a)} = \transpose{M_a}\) for all \(a \in A\). Let $f \in R = \rOre{A}{z}{\sigma}$ such that $\rann{R}{Rf} = hR$ for some $h \in R$. If $\mathcal{C} = \tovector (Rf)$,  then $\mathcal{C}^\bot = \tovector(\widehat{R}\Theta(h))$, and both $\mathcal{C}$ and $\mathcal{C}^\bot$ are left ideal convolutional codes. Moreover, $Rf = \lann{R}{hR}$.
\end{theorem}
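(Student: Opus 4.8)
The plan is to reduce Theorem~\ref{dualidealcode} to a direct application of the machinery already built in Section~\ref{DualAnn}, together with Proposition~\ref{FzTranspose}. First I would invoke Proposition~\ref{FzTranspose}: under the standing hypotheses $M_{\widehat\sigma} = \transpose{M_\sigma}$ and $M_{\theta(a)} = \transpose{M_a}$ for all $a\in A$, the map $\Theta$ of~\eqref{antimapOre} is an anti-isomorphism of $\field$--algebras making $(\field[z], R, \tovector)$ and $(\field[z], \widehat R, \tovector)$ transpose Hamming ring extensions, where $\widehat R = \rOre{A}{z}{\widehat\sigma}$. With this in hand, the hypothesis $\rann{R}{Rf} = hR$ is exactly the left-hand side of Theorem~\ref{Dualcyclic}, so that result immediately gives $\mathcal{C}^\bot = \tovector(\widehat R\,\Theta(h))$ (note $\widehat\tovector = \tovector$ here, since both rings coincide as left $\field[z]$--modules and the basis $B$ is the same), and moreover $\mathcal{C}^\bot$ is generated by the rows of $\transpose{M_R(h)}$.

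Next I would address the claim that both $\mathcal{C}$ and $\mathcal{C}^\bot$ are left ideal convolutional codes in the sense of Definition~\ref{licc}. By construction $\mathcal{C} = \tovector(Rf)$ with $Rf$ a left ideal of $R$, and $\mathcal{C}^\bot = \tovector(\widehat R\,\Theta(h))$ with $\widehat R\,\Theta(h)$ a left ideal of $\widehat R$; since $\widehat R$ is itself a skew polynomial ring of the same automorphism type over $A$, and the Hamming ring extension structure is the same coordinate map $\tovector$, the code $\mathcal{C}^\bot$ is of the required form relative to $\widehat R$. What remains for Definition~\ref{licc} is that each of $\mathcal{C}$ and $\mathcal{C}^\bot$ be a direct summand of $\field[z]^m$. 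Here I would argue that $\mathcal{C}^\bot$ is always a direct summand of $\field[z]^m$, because it is the kernel of a matrix map $\cdot\transpose{M_R(f)}$ over the principal ideal domain $\field[z]$ and hence is a saturated (pure) submodule of the free module $\field[z]^m$; over a PID such a submodule is a direct summand. Then $\mathcal{C} = \mathcal{C}^{\bot\bot}$ (invoking Remark~\ref{pidFrobenius}, using that $\mathcal{C}^\bot$ is a direct summand, so $\mathcal{C}^{\bot\bot\bot}=\mathcal{C}^\bot$ and iterating; or more directly, $\mathcal{C}$ is the image of $\cdot M_R(f)$ and one shows its double dual returns $\mathcal{C}$) is likewise a direct summand by the same saturation argument applied to $\mathcal{C}^\bot$. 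The cleanest route is: $\mathcal{C}^\bot$ is a kernel, hence a direct summand; therefore $\mathcal{C}^{\bot\bot}$ is a kernel, hence a direct summand; and one checks $\mathcal{C}^{\bot\bot} = \mathcal{C}$ using that $\mathcal{C}$ was assumed — wait, it is \emph{not} assumed a summand, so I must be careful — actually Definition~\ref{licc} requires it, so the honest statement is that the theorem's conclusion ``both are left ideal convolutional codes'' needs $\mathcal{C}$ to be a summand; I would therefore observe that $\mathcal{C}^{\bot\bot}$ is a summand, then show $\mathcal{C} \subseteq \mathcal{C}^{\bot\bot}$ always and $\mathcal{C}^{\bot\bot} = \tovector(Rf)$ follows once we know $\mathcal C$ is saturated, closing the loop via the PID structure.

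Finally, for the last assertion $Rf = \lann{R}{hR}$, I would apply Corollary~\ref{annann}: it states precisely that if $\mathcal{C}^{\bot\bot} = \mathcal{C}$ and $hR = \rann{R}{Rf}$, then $Rf = \lann{R}{hR}$. The hypothesis $hR = \rann{R}{Rf}$ is given, and $\mathcal{C}^{\bot\bot} = \mathcal{C}$ holds by Remark~\ref{pidFrobenius} once we know $\mathcal{C}$ is a direct summand of the free module over the PID $\field[z]$ — which is part of Definition~\ref{licc} and was established in the previous step. So this part is a one-line invocation.

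The main obstacle I anticipate is the direct-summand bookkeeping in the middle paragraph: disentangling what is hypothesis versus conclusion in Definition~\ref{licc}, and pinning down cleanly why kernels and images of the relevant matrix maps over $\field[z]$ are direct summands (the key fact being that a submodule of a free module over a PID is a direct summand iff the quotient is torsion-free, i.e.\ iff the submodule is saturated, and that $\ker(\cdot N)$ is always saturated). Everything else — the anti-isomorphism, the identification of $\mathcal{C}^\bot$, and the annihilator statement — is a mechanical concatenation of Proposition~\ref{FzTranspose}, Theorem~\ref{Dualcyclic}, and Corollary~\ref{annann}.
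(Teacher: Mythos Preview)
Your overall plan coincides with the paper's proof: invoke Proposition~\ref{FzTranspose} to get the transposition, apply Theorem~\ref{Dualcyclic} to obtain $\mathcal{C}^\bot = \tovector(\widehat{R}\Theta(h))$, and then use Corollary~\ref{annann} together with Remark~\ref{pidFrobenius} for $Rf = \lann{R}{hR}$. That skeleton is exactly what the paper does.

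The one point of divergence is how the direct-summand condition in Definition~\ref{licc} is verified. The paper does not argue this from scratch; it observes that $\widehat{R}\Theta(h) = \lann{\widehat{R}}{\Theta(f)\widehat{R}}$ (this comes out of the proof of Theorem~\ref{Dualcyclic}) and then cites \cite[Lemma~2.1]{GLN:2016}, whose content is precisely that a left ideal which is a left annihilator maps under $\tovector$ to a direct summand of $\field[z]^m$. Your kernel-over-a-PID argument for $\mathcal{C}^\bot$ is correct and is essentially a reproof of that lemma: $\tovector(\lann{\widehat R}{\Theta(f)}) = \ker(\cdot M_{\widehat R}(\Theta(f)))$ is pure in $\field[z]^m$, hence a summand. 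For $\mathcal{C}$, the paper's order is to first obtain $Rf = \lann{R}{hR}$ via Corollary~\ref{annann} and Remark~\ref{pidFrobenius}, and \emph{then} apply the same lemma (now $Rf$ is a left annihilator, so $\mathcal{C}$ is a summand). Your instinct that there is a bookkeeping issue here is sound: Corollary~\ref{annann} requires $\mathcal{C}^{\bot\bot}=\mathcal{C}$, and Remark~\ref{pidFrobenius} supplies this over a PID only for direct summands, which is what one is trying to conclude. The paper's proof is terse on exactly this point and leans on the cited external lemma; your proposal is more explicit about the difficulty but does not resolve it more cleanly than the paper does. In short: same approach, same soft spot, and the paper's remedy is an appeal to \cite[Lemma~2.1]{GLN:2016} rather than an internal argument.
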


\begin{proof}
The equality $\mathcal{C}^\bot = \tovector(\widehat{R}\Theta(h))$ follows from Proposition \ref{Dualcyclic} and Proposition \ref{FzTranspose}. As observed in the proof of Theorem \ref{Dualcyclic}, $\widehat{R}\Theta(h) = \lann{\widehat{R}}{\Theta(f)}$. Then $\mathcal{C}^\bot$ is a left ideal convolutional code by \cite[Lemma 2.1]{GLN:2016}. Corollary \ref{annann}, by virtue of Remark \ref{pidFrobenius}, gives the equality $Rf = \lann{R}{hR}$. Therefore, $\mathcal{C}$ is a left ideal convolutional code, by \cite[Lemma 2.1]{GLN:2016}.
\end{proof}

\begin{remark}
If $\field{}[z] \subseteq R$ is a separable ring extension, then every left ideal convolutional code is of the form $\tovector(Re)$ for some idempotent $e \in R$ (see \cite{GLN2017IdealCodes}). Therefore, $\rann{R}{Re} = (1-e)R$. 
\end{remark}

\begin{remark}
If $A = \field G$, the group algebra of a finite group $G$, then Theorem \ref{dualidealcode} gives, in this particular case, a stronger statement than \cite[Theorem 4.21]{LopezPermouth/Szabo:2013}, for the case where the $\sigma$--derivation considered there is zero. 
\end{remark}

\subsection{Convolutional codes with a simple word-ambient algebra.}

In \cite{GLN:2016}, dual codes of left ideal convolutional codes, when \(A\) is the matrix algebra \(\matrixring{n}{\field{}}\), are studied. Concretely, \cite[propositions 2.9 and 2.10]{GLN:2016} provide a transposition of Hamming ring extensions in this special case, and, henceforth, Theorem \ref{dualidealcode} generalizes and improves \cite[Theorem 2.12]{GLN:2016}. 

Our next aim is to give sufficient conditions to apply Proposition \ref{FzTranspose} and Theorem \ref{dualidealcode} to \(A = \matrixring{n}{\mathbb{K}}\),  where \(\mathbb{K} = \field[q^t]\) is a finite field extension of \(\field{} = \field[q] \) of degree \(t\).  Let \(D = \{ \alpha_0, \dots, \alpha_{t-1} \}\) be a basis of \(\mathbb{K}\) as an \(\field{}\)--vector space. We have a monomorphism of $\field$--algebras
\[
\begin{split}
\tomatrix : \mathbb{K} &\to \matrixring{t}{\field{}} \\
\gamma &\mapsto \tomatrix(\gamma),
\end{split}
\]
where $\tomatrix(\gamma)$ is the matrix that represents the multiplication map $\cdot \gamma : \mathbb{K} \to \mathbb{K}$ for $\gamma \in \mathbb{K}$ with respect to $D$.

Let $V, V'$ be finite-dimensional $\mathbb{K}$--vector spaces with bases $B = \{ v_0, \dots, v_{r-1}\}$ and $B' = \{v'_0, \dots, v'_{s-1} \}$, respectively. Define the $\mathbb{F}$--basis \[ B_{\field} = \{ \alpha_iv_j : 0 \leq i \leq t, 0 \leq j \leq r \}\] of $V$ ordered by the condition that $\alpha_i v_j$ is before than $\alpha_k v_{j+1}$ for all $i,j,k$. The $\field$--basis $B'_{\field}$ of $V'$ is defined analogously. 

\begin{lemma}\label{matrix_expansion}\label{tomatrix_multiplicative}
Let \(M = (m_{ij}) \in \matrixring{r \times s}{\mathbb{K}}\) be the matrix associated with respect to the bases $B$ and $B'$ to a $\mathbb{K}$--linear map $\lambda : V \to V'$.  The matrix associated to $\lambda$ considered as an $\field$--linear map with respect to the bases $B_{\field}$ and $B'_{\field}$  is
\[
\tomatrix(M) = \Big( \tomatrix(m_{ij}) \Big)_{0 \leq i < r \atop 0 \leq j < s} \in \matrixring{rt \times st}{\field{}}.
\]
\end{lemma}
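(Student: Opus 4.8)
The plan is to reduce the statement to a block-matrix identity by tracking carefully how the two $\field$-bases $B_\field$ and $B'_\field$ are ordered. The key observation is that the ordering convention — first vary the $\alpha_i$, then move to the next $v_j$ — means that the $\field$-basis is obtained from the $\mathbb{K}$-basis by ``expanding each coordinate block of size one into a block of size $t$'': the vector $v_j$ contributes the consecutive block $\alpha_0 v_j, \dots, \alpha_{t-1} v_j$ of $B_\field$, and similarly for $B'_\field$. So I would first set up the precise indexing: write an index of $B_\field$ as a pair $(j,i)$ with $0 \le j < r$, $0 \le i < t$, occurring in lexicographic order with $j$ most significant, and likewise for $B'_\field$.

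Next I would compute directly what $\lambda$ does to a basis vector $\alpha_i v_j \in B_\field$. Since $\lambda$ is $\mathbb{K}$-linear, $\lambda(\alpha_i v_j) = \alpha_i \lambda(v_j) = \alpha_i \sum_{\ell} m_{j\ell} v'_\ell = \sum_\ell (\alpha_i m_{j\ell}) v'_\ell$. Now each coefficient $\alpha_i m_{j\ell} \in \mathbb{K}$ must be re-expanded in the basis $D$: by definition of $\tomatrix$, the coordinate vector of $\alpha_i m_{j\ell}$ with respect to $D$ is the $i$-th row of $\tomatrix(m_{j\ell})$ (this is exactly the defining property of $\tomatrix(m_{j\ell})$ as the matrix of multiplication by $m_{j\ell}$ acting on row-coordinate vectors, consistent with the ``rows'' convention used for $M_R$ throughout the paper). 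Hence the coordinates of $\lambda(\alpha_i v_j)$ in $B'_\field$, restricted to the block indexed by $\ell$, form the $i$-th row of $\tomatrix(m_{j\ell})$. Assembling this over all $i$ and all $\ell$ shows that the $(j,\ell)$ block of the matrix of $\lambda$ with respect to $(B_\field, B'_\field)$ is precisely $\tomatrix(m_{j\ell})$, which is the claimed formula.

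The only real care needed is bookkeeping: making sure the ``rows act on the right'' convention that governs $M_R(f)$ (so that composition of linear maps corresponds to the stated order of matrix products) is applied consistently here, so that $\tomatrix(\gamma)$ really is the matrix with the coordinate vector of $\alpha_i \gamma$ in its $i$-th row, and that the block decomposition inherits this convention. Once the index dictionary $(j,i) \leftrightarrow$ position in $B_\field$ is fixed, everything else is a routine unravelling of definitions; there is no genuine obstacle beyond keeping the two levels of expansion (the $\mathbb{K}$-level coordinates $m_{j\ell}$ and the $\field$-level sub-coordinates inside each $\tomatrix(m_{j\ell})$) from getting tangled. I would therefore present the argument as: (i) fix the pair-indexing of $B_\field$, $B'_\field$; (ii) evaluate $\lambda$ on $\alpha_i v_j$ and read off the $(j,\ell)$-block row by row; (iii) conclude the block-matrix identity.
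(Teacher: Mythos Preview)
Your proposal is correct and is exactly the direct verification the paper has in mind; the paper's own proof is the single word ``Straightforward,'' and your write-up simply fills in the routine block-indexing and coordinate expansion that this word is standing in for.
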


\begin{proof}
Straightforward.
\end{proof}

Consider the $\mathbb{K}$--basis  \(\mathcal{B} = \{ E_{ij} ~|~ 0 \leq i,j < n, 0 \leq k < t \}\) of $A = \matrixring{n}{\mathbb{K}}$, where \(E_{ij}\) is the matrix with \(1\) in the position corresponding to \(i\)th row and \(j\)th column, and \(0\) elsewhere. Order $\mathcal{B}$ in such a way that the corresponding coordinate map $\matrixring{n}{\mathbb{K}} \to \mathbb{K}^{n^2}$  writes each matrix as the concatenation of its rows. We thus get the $\field$--basis $\mathcal{B}_{\field} = \{ \alpha_kE_{ij}  : 0 \leq k \leq t-1, 0 \leq i, j \leq n-1 \}$ ordered as described above. This last basis leads to a coordinate map $\tovector : A \to \mathbb{F}^{tn^2}$ which, for any $\field$--automorphism $\sigma$ of $A$, extends to an $\mathbb{F}[z]$--linear isomorphism $\tovector : \rOre{A}{z}{\sigma} \to \mathbb{F}[z]^{tn^2}$. Our aim is to prove that, if the basis $D$ of $\mathbb{K}$ over $\mathbb{F}$ is a normal self-dual basis, then a transposition $\Theta$ fulfilling the hypotheses of Proposition \ref{FzTranspose} can be constructed.

Let  \(D^* = \{ \beta_0, \dots, \beta_{t-1} \}\) the dual basis of $D = \{\alpha_0, \dots, \alpha_{t-1} \}$. Recall that \(D\) is normal if there exists \(\alpha \in \mathbb{K}\) such that \(\alpha_k = \alpha^{q^k}\) for all \(0 \leq k < t\). Also recall that \(D \) is self-dual if \(\alpha_k = \beta_k\) for all \(0 \leq k < t\).

The coordinate map with respect to $D$ is given by
\[
 \mathbb{K} \to \field{}^t, \quad
\gamma \mapsto \left(\trace(\beta_0 \gamma), \dots, \trace(\beta_{t-1} \gamma) \right) 
\]
where \(\trace \) denotes the trace map of the field extension \(\field{} \subseteq \mathbb{K}\). 
By the properties of the trace function,
\[
\tomatrix(\gamma) =  \Big( \trace(\beta_i \gamma \alpha_j) \Big)_{0 \leq i,j < t}.
\]

As involution on \(A\), we use the natural one, that is, \(\theta : A \to A\) be the involution given by \(\theta(a) = \transpose{a}\).  In order to show how $\theta$ fulfills the Proposition \ref{FzTranspose}, we need to use some properties of the Kronecker product of matrices over a field $L$.

Recall that if \(M \in \matrixring{r \times s}{L}\) and \(N \in \matrixring{r' \times s'}{L}\), the Kronecker product of \(M\) and \(N\) is defined as 
\[
M \kronecker N = \left( \begin{matrix}
m_{0,0} N & m_{0,1} N & \cdots & m_{0,s-1} N \\
m_{1,0} N & m_{1,1} N & \cdots & m_{1,s-1} N \\
\vdots & \vdots & \ddots & \vdots \\
m_{r-1,0} N & m_{r-1,1} N & \cdots & m_{r-1,s-1} N 
\end{matrix}\right),
\]
where \(M = \big( m_{i,j} \big)_{0 \leq i < r, 0 \leq j < s}\). Properties of the Kronecker product can be seen in \cite[Chapter 4]{Horn/Johnson:1994}.

\begin{lemma}\label{propertiesofMa}
The following properties hold for all \(a \in A\):
\begin{enumerate}[(i)]
\item \(M_a = \tomatrix(I \kronecker a)\).
\item If \(B\) is self-dual, \(\transpose{M_a} = M_{\transpose{a}}\).
\end{enumerate}
\end{lemma}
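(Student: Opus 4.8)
The plan is to prove (i) straight from the definition of $M_R$ together with Lemma~\ref{matrix_expansion}, and then to deduce (ii) from (i) by combining the standard behaviour of the Kronecker product under transposition with one elementary observation about self-dual bases.

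For (i): by definition $M_a = M_R(a)$ is the matrix over $\field{}$, relative to the basis $\mathcal{B}_{\field}$, of the $\field{}$--linear endomorphism $\cdot a\colon A \to A$, $b\mapsto ba$, of $A=\matrixring{n}{\mathbb{K}}$. Since $\mathcal{B}_{\field}$ is exactly the $\field{}$--basis built from the $\mathbb{K}$--basis $\mathcal{B}=\{E_{ij}\}$ in the manner prescribed before Lemma~\ref{matrix_expansion} (the coordinates relative to $D$ nested inside each matrix unit), that lemma reduces the task to computing the $\mathbb{K}$--matrix $M$ of $\cdot a$ relative to $\mathcal{B}$ ordered by concatenation of rows: one then has $M_a = \tomatrix(M)$. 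A direct computation shows that the rows of $ba$ are the rows of $b$ each multiplied on the right by $a$, so, reading a matrix as the concatenation of its rows, $\cdot a$ acts as the block-diagonal matrix all of whose diagonal blocks equal $a$; that is, $M = I\kronecker a$ with $I$ the $n\times n$ identity. Hence $M_a = \tomatrix(I\kronecker a)$.

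For (ii): applying (i) to $a$ and to $\transpose a$ and using the standard identity $\transpose{(I\kronecker a)} = \transpose I\kronecker \transpose a = I\kronecker \transpose a$, it suffices to show that $\tomatrix$ commutes with transposition on $\matrixring{n^2}{\mathbb{K}}$ when $D$ is self-dual, for then $\transpose{M_a} = \transpose{\tomatrix(I\kronecker a)} = \tomatrix(I\kronecker \transpose a) = M_{\transpose a}$. By Lemma~\ref{matrix_expansion}, $\tomatrix(X)$ is the $t\times t$--block matrix whose $(i,j)$ block is $\tomatrix(x_{ij})$; since transposing a block matrix with blocks of equal size transposes both the block indices and every individual block, the identity $\transpose{\tomatrix(X)} = \tomatrix(\transpose X)$ reduces to the assertion that $\tomatrix(\gamma)$ is a symmetric matrix for every $\gamma\in\mathbb{K}$. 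When $D$ is self-dual we have $\beta_k=\alpha_k$, whence $\tomatrix(\gamma)_{ij} = \trace(\alpha_i\gamma\alpha_j) = \trace(\alpha_j\gamma\alpha_i) = \tomatrix(\gamma)_{ji}$ because $\mathbb{K}$ is commutative, and the proof is complete.

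I expect no genuine difficulty here; the work is essentially bookkeeping. The one point that needs care is matching conventions: one must check that the ordering chosen for $\mathcal{B}_{\field}$ is precisely the one under which Lemma~\ref{matrix_expansion} applies block-by-block, and that it is \emph{right} multiplication by $a$ — matching the convention that $M_R(f)$ represents $g\mapsto gf$ — that yields a block-diagonal matrix once matrices are flattened by rows. The only conceptual remark worth isolating is that self-duality of $D$ is exactly what makes $\tomatrix$ compatible with transposition, so that is where, and only where, the hypothesis of (ii) enters.
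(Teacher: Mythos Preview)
Your proof is correct and follows the same approach as the paper: both derive (i) from Lemma~\ref{matrix_expansion} together with the observation that right multiplication by $a$ on $\matrixring{n}{\mathbb{K}}$, in the row-concatenation basis, is represented by $I\kronecker a$, and both derive (ii) from (i) via the symmetry of $\tomatrix(\gamma)$ under the self-duality hypothesis and the transposition behaviour of block/Kronecker matrices. The only difference is that the paper offloads these two facts to \cite[Lemma~2.2]{GLN:2016} and \cite[Lemma~2.3(iii)]{GLN:2016}, whereas you verify them directly.
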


\begin{proof}
(i) follows from Lemma \ref{matrix_expansion} and \cite[Lemma 2.2]{GLN:2016}. If \(D\) is self-dual then \(\tomatrix(\gamma)\) is symmetric for all \(\gamma \in \mathbb{K}\), hence (ii) follows from (i) and \cite[Lemma 2.3(iii)]{GLN:2016}. 
\end{proof}

Lemma \ref{propertiesofMa} ensures the first hypothesis of Proposition \ref{FzTranspose}. As for the second one concerns, namely the equality \(M_{\widehat{\sigma}} = \transpose{M_\sigma}\), it will be obtained with the help of a suitable decomposition of $\sigma$. Let \(\tau : \mathbb{K} \to \mathbb{K}\) denote the Frobenius $\mathbb{F}$--automorphism of $\mathbb{K}$, i.e. \(\tau(\gamma) = \gamma^q\).  By \cite[Theorem 2.4]{Cauchon/Robson:1978}, for every $\mathbb{F}$--automorphism $\sigma$ of the matrix algebra $A$,  there exist a regular matrix \(U \in A\) and \(0 \leq h \leq t-1\) such that \(\sigma = \sigma_U \circ \sigma_{\tau^h}\), where \(\sigma_U\) is the inner automorphism associated to \(U\), i.e. \(\sigma_U(a) = U a U^{-1}\), and \(\sigma_{\tau^h}\) is the componentwise extension of \(\tau^h\) to \(A\), i.e. \(\sigma_{\tau^h}\big( m_{ij} \big) = \big( \tau^h(m_{ij}) \big)\).

We are going to analyze inner automorphisms and extensions of field automorphisms independently, and later we will join both results. 

For each regular matrix \(U \in A\), a straightforward computation shows that \(\widehat{\sigma_U} = \theta \sigma_U^{-1} \theta = \sigma_{\transpose{U}}\).

\begin{lemma}\label{propertiesofMsigmaU}
The following properties hold:
\begin{enumerate}[(i)]
\item \(M_{\sigma_U} = \tomatrix(\transpose{U} \kronecker U^{-1})\).
\item If \(D\) is self-dual, then \(M_{\widehat{\sigma_U}} = \transpose{M_{\sigma_U}}\).
\end{enumerate}
\end{lemma}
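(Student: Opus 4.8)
The plan is to prove Lemma \ref{propertiesofMsigmaU} by exploiting the decomposition $\sigma_U(a) = UaU^{-1}$ and the multiplicative behavior of $\tomatrix$ together with the standard identity for how the Kronecker product represents the two-sided multiplication map on a matrix algebra. For part (i), I would first recall that the $\mathbb{K}$--linear map $a \mapsto UaU^{-1}$ on $\matrixring{n}{\mathbb{K}}$, when coordinatized by writing matrices as concatenations of rows, is represented by the Kronecker product $\transpose{U} \kronecker U^{-1}$; this is precisely the kind of fact recorded in \cite[Lemma 2.2]{GLN:2016} as used already in Lemma \ref{propertiesofMa}(i). Thus the $\mathbb{K}$--matrix of $\sigma_U$ with respect to $\mathcal{B}$ is $\transpose{U} \kronecker U^{-1} \in \matrixring{n^2}{\mathbb{K}}$. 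Applying Lemma \ref{matrix_expansion} (which states that passing from the $\mathbb{K}$--basis to the induced $\field$--basis $\mathcal{B}_\field$ amounts to applying $\tomatrix$ entrywise, i.e. the global map $\tomatrix$) gives $M_{\sigma_U} = \tomatrix(\transpose{U} \kronecker U^{-1})$, which is (i).

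For part (ii), the key observation is that $\widehat{\sigma_U} = \theta \sigma_U^{-1} \theta = \sigma_{\transpose{U}}$, the computation already flagged in the text just before the lemma. So by part (i) applied to $\transpose{U}$ in place of $U$, we have $M_{\widehat{\sigma_U}} = M_{\sigma_{\transpose{U}}} = \tomatrix\big(\transpose{(\transpose{U})} \kronecker (\transpose{U})^{-1}\big) = \tomatrix\big(U \kronecker \transpose{(U^{-1})}\big)$. It then remains to check that this equals $\transpose{M_{\sigma_U}} = \transpose{\tomatrix(\transpose{U} \kronecker U^{-1})}$. Here I would combine two ingredients: first, that $\tomatrix$ commutes with transposition when $D$ is self-dual — since in that case $\tomatrix(\gamma)$ is symmetric for every $\gamma \in \mathbb{K}$ (noted in the proof of Lemma \ref{propertiesofMa}(ii)), and more generally $\tomatrix(\transpose{M}) = \transpose{\tomatrix(M)}$ for a block matrix $M$ over $\mathbb{K}$ (transposing a block matrix transposes the block structure and each block, and each block transpose matches $\tomatrix$ of the scalar transpose by symmetry of $\tomatrix(\gamma)$); and second, the Kronecker product identity $\transpose{(M \kronecker N)} = \transpose{M} \kronecker \transpose{N}$ from \cite[Chapter 4]{Horn/Johnson:1994}. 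Together these give $\transpose{\tomatrix(\transpose{U} \kronecker U^{-1})} = \tomatrix\big(\transpose{(\transpose{U} \kronecker U^{-1})}\big) = \tomatrix\big(U \kronecker \transpose{(U^{-1})}\big)$, matching the expression for $M_{\widehat{\sigma_U}}$ above, and (ii) follows.

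The step I expect to require the most care is the interplay between $\tomatrix$ and transposition at the level of $n^2 \times n^2$ block matrices over $\mathbb{K}$: one must be careful that the entrywise application of $\tomatrix$ in Lemma \ref{matrix_expansion} refers to the correct ordering of the induced $\field$--basis $\mathcal{B}_\field$, and that "transpose the big $\field$--matrix" corresponds cleanly to "transpose the block arrangement and transpose each $t \times t$ block" — which in turn equals $\tomatrix$ of the transposed $\mathbb{K}$--matrix only because self-duality of $D$ forces each $\tomatrix(\gamma)$ to be symmetric. Everything else is a routine assembly of \cite[Lemma 2.2]{GLN:2016}, the Kronecker transpose rule, and the already-noted identity $\widehat{\sigma_U} = \sigma_{\transpose{U}}$. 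I would therefore state the block-transpose compatibility of $\tomatrix$ explicitly (as a one-line consequence of $D$ self-dual) and then present (i) and (ii) as short formal computations.
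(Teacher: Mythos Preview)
Your proposal is correct and follows essentially the same route as the paper's (very terse) proof: part (i) is obtained from the Kronecker representation of the conjugation map together with Lemma \ref{matrix_expansion}, and part (ii) from (i) using that $\tomatrix(\gamma)$ is symmetric when $D$ is self-dual. The only slip is the citation: the paper invokes \cite[Lemma 2.3(ii)]{GLN:2016} for the Kronecker representation of $\sigma_U$, whereas \cite[Lemma 2.2]{GLN:2016} (which you cite) is the statement used for $M_a$ in Lemma \ref{propertiesofMa}.
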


\begin{proof}
(i) follows from Lemma \ref{matrix_expansion} and \cite[Lemma 2.3(ii)]{GLN:2016}. If \(D\) is self-dual then \(\tomatrix(\gamma)\) is symmetric for all \(\gamma \in \mathbb{K}\), hence (ii) follows from (i).
\end{proof}

\begin{proposition}\label{TCsigmaU}
Let \(R = \rOre{A}{z}{\sigma_U}\) and \(\widehat{R} = \rOre{A}{z}{\widehat{\sigma_U}}\). If \(D\) is a self-dual basis,  then \((\field{}[z],R,\tovector)\) and \((\field{}[z],\widehat{R},\tovector)\) are transposed Hamming extensions. 
\end{proposition}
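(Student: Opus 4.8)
The plan is to deduce Proposition~\ref{TCsigmaU} directly from Proposition~\ref{FzTranspose}, whose two hypotheses are exactly the equalities $M_{\theta(a)} = \transpose{M_a}$ for all $a \in A$ and $M_{\widehat{\sigma}} = \transpose{M_\sigma}$ with $\sigma = \sigma_U$. The first hypothesis is Lemma~\ref{propertiesofMa}(ii), which holds since $D$ is self-dual and $\theta(a) = \transpose{a}$. The second hypothesis, $M_{\widehat{\sigma_U}} = \transpose{M_{\sigma_U}}$, is precisely Lemma~\ref{propertiesofMsigmaU}(ii), again using that $D$ is self-dual. Since both conditions are met, Proposition~\ref{FzTranspose} applies with $\theta$ the transpose involution and yields that the anti-isomorphism $\Theta : \rOre{A}{z}{\sigma_U} \to \rOre{A}{z}{\widehat{\sigma_U}}$ of~\eqref{antimapOre} is a transposition, so $(\field{}[z], R, \tovector)$ and $(\field{}[z], \widehat{R}, \tovector)$ are transposed Hamming ring extensions.

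The one point that requires a small preliminary remark is that $(\field{}[z], \widehat{R}, \tovector)$ really is a Hamming ring extension of the same rank $tn^2$, using the same coordinate map $\tovector$ induced by the ordered $\field$--basis $\mathcal{B}_{\field}$ of $A$. This is because $\rOre{A}{z}{\sigma_U}$ and $\rOre{A}{z}{\widehat{\sigma_U}}$ coincide as left $\field{}[z]$--modules (they are both the free left $\field{}[z]$--module on $\mathcal{B}_{\field}$), so the isomorphism $\tovector : \widehat{R} \to \field{}[z]^{tn^2}$ is available verbatim; this was already noted in the paragraph following~\eqref{antimapOre}. Hence the hypotheses of Proposition~\ref{FzTranspose} about the ambient Hamming extensions are in place.

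I do not anticipate a genuine obstacle here: the proposition is a specialization of Proposition~\ref{FzTranspose} obtained by checking its two numerical hypotheses, and both have been isolated as Lemma~\ref{propertiesofMa}(ii) and Lemma~\ref{propertiesofMsigmaU}(ii). The only care needed is bookkeeping: confirming that $\widehat{\sigma_U} = \theta \sigma_U^{-1}\theta = \sigma_{\transpose{U}}$ (the straightforward computation mentioned just before Lemma~\ref{propertiesofMsigmaU}) so that the map $\Theta$ of~\eqref{antimapOre} indeed lands in $\rOre{A}{z}{\sigma_{\transpose{U}}} = \widehat{R}$, and that the self-duality of $D$ is the single standing hypothesis driving both symmetry facts $\transpose{\tomatrix(\gamma)} = \tomatrix(\gamma)$ used in the two lemmas. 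With these observations the proof is a two-line invocation.

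\begin{proof}
As remarked above, $\widehat{\sigma_U} = \theta \sigma_U^{-1} \theta = \sigma_{\transpose{U}}$, so $\widehat{R} = \rOre{A}{z}{\widehat{\sigma_U}}$ is, as a left $\field{}[z]$--module, the free module on the ordered basis $\mathcal{B}_{\field}$, and $(\field{}[z], \widehat{R}, \tovector)$ is a Hamming ring extension of rank $tn^2$ via the same coordinate map $\tovector$. Since $D$ is self-dual, Lemma~\ref{propertiesofMa}(ii) gives $M_{\theta(a)} = \transpose{M_a}$ for all $a \in A$, and Lemma~\ref{propertiesofMsigmaU}(ii) gives $M_{\widehat{\sigma_U}} = \transpose{M_{\sigma_U}}$. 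Both hypotheses of Proposition~\ref{FzTranspose} are therefore satisfied, and we conclude that $(\field{}[z], R, \tovector)$ and $(\field{}[z], \widehat{R}, \tovector)$ are transposed Hamming ring extensions via the anti-isomorphism $\Theta$ of~\eqref{antimapOre}.
\end{proof}
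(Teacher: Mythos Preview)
Your proposal is correct and follows essentially the same approach as the paper's own proof: verify the two hypotheses of Proposition~\ref{FzTranspose} by invoking Lemma~\ref{propertiesofMa}(ii) and Lemma~\ref{propertiesofMsigmaU}(ii), both of which hold because $D$ is self-dual, and then conclude. Your additional remarks about $\widehat{R}$ sharing the same $\field{}[z]$--module structure and coordinate map $\tovector$ are accurate but were already absorbed by the paper into the discussion preceding Proposition~\ref{FzTranspose}.
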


\begin{proof}
By Lemma \ref{propertiesofMa}, \(M_{\transpose{a}} = \transpose{M_a}\) for all \(a \in A\). By Lemma \ref{propertiesofMsigmaU}, \(M_{\widehat{\sigma_U}} = \transpose{M_{\sigma_U}}\). Hence the result follows from Proposition \ref{FzTranspose}.
\end{proof}

Now we focus on extensions of field automorphisms. 

\begin{lemma}\label{Ph}
Let \(\tau\) be the Frobenius automorphism of $\mathbb{K}$ over $\mathbb{F}$, and let \(0 \leq h < t\). If \(D\) is a normal basis, then the automorphism \(\tau^h\) is represented as \(\field{}\)--linear map by the matrix 
\[
P_h = \begin{pmatrix} 0 & I_{t-h} \\ I_h & 0 \end{pmatrix} \in \matrixring{t}{\field{}}.
\]
The automorphism \(\sigma_{\tau^h} : A \to A\) is represented as \(\field{}\)--linear map by the matrix 
\[
M_{\sigma_{\tau^h}} = I_{n^2} \kronecker P_h = \begin{pmatrix} P_h & & \\ & \ddots & \\ & & P_h \end{pmatrix} \in \matrixring{n^2t}{\field{}}. 
\]
\end{lemma}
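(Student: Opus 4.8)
The plan is to verify the two claimed matrix identities directly from the definitions of the coordinate maps, exploiting the hypothesis that $D = \{\alpha_0, \dots, \alpha_{t-1}\}$ is a normal basis, so that $\alpha_k = \alpha^{q^k} = \tau^k(\alpha)$ for a fixed $\alpha \in \mathbb{K}$. First I would compute the effect of $\tau^h$ on a basis vector $\alpha_j$: since $\tau^h(\alpha_j) = \tau^h(\tau^j(\alpha)) = \tau^{h+j}(\alpha) = \alpha_{h+j}$, where the index $h+j$ is read modulo $t$. Thus $\tau^h$ simply permutes the basis $D$ cyclically by the shift $j \mapsto h + j \pmod t$. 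Writing out the matrix of this map with respect to $D$ (recall the coordinate convention $\gamma \mapsto (\trace(\beta_0\gamma), \dots, \trace(\beta_{t-1}\gamma))$, with rows indexed by the target coordinate and columns by the source basis vector) gives exactly the block permutation matrix $P_h = \begin{pmatrix} 0 & I_{t-h} \\ I_h & 0 \end{pmatrix}$; I would double-check the placement of the two identity blocks against the chosen ordering conventions, as a sign/transpose slip here is the only real pitfall. One can also note $P_h = P_1^h$, which makes the cyclic-shift structure transparent and provides a sanity check.

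Next I would pass from $\tau^h$ acting on $\mathbb{K}$ to $\sigma_{\tau^h}$ acting on $A = \matrixring{n}{\mathbb{K}}$. By definition $\sigma_{\tau^h}$ applies $\tau^h$ entrywise to an $n \times n$ matrix over $\mathbb{K}$. In the $\field$--basis $\mathcal{B}_{\field} = \{\alpha_k E_{ij}\}$, the automorphism $\sigma_{\tau^h}$ sends $\alpha_k E_{ij} \mapsto \tau^h(\alpha_k) E_{ij} = \alpha_{h+k} E_{ij}$, so it acts on the $\mathbb{K}$--coordinate block attached to each fixed matrix position $(i,j)$ exactly as $\tau^h$ acts on $\mathbb{K}$, and does nothing to mix different positions $(i,j)$. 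With the chosen ordering of $\mathcal{B}_{\field}$ (each matrix written as the concatenation of its rows, and within each $E_{ij}$-slot the $\alpha_k$'s ordered by $k$), this block-diagonal structure is precisely $M_{\sigma_{\tau^h}} = I_{n^2} \kronecker P_h$, the $n^2$-fold diagonal sum of copies of $P_h$. This is essentially the $r = s = n$, $\lambda$-entrywise-scalar instance of Lemma~\ref{matrix_expansion} (or can be read off directly from it), so the second assertion follows once the first is established.

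The main obstacle is bookkeeping rather than mathematics: one must be scrupulous about (a) whether the coordinate map produces row or column vectors and hence whether the representing matrix or its transpose appears, (b) the cyclic index convention in $P_h$ (is the shift by $h$ or by $-h$, and which block of $I$ sits where), and (c) the compatibility of the ordering of $\mathcal{B}_{\field}$ with the Kronecker-product convention, so that ``entrywise $\tau^h$'' really becomes $I_{n^2} \kronecker P_h$ and not $P_h \kronecker I_{n^2}$. I would pin down (a) by matching the stated formula $\gamma \mapsto (\trace(\beta_0\gamma), \dots)$, which fixes the convention used throughout this section, and then (b) and (c) reduce to reading off the action on ordered basis elements. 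Since the excerpt labels such verifications ``straightforward,'' I expect the formal write-up to be short: state $\tau^h(\alpha_k) = \alpha_{h+k}$, exhibit the matrix, and invoke Lemma~\ref{matrix_expansion} for the passage to $A$.
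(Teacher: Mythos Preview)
Your approach is correct and matches the paper's one-line proof, which simply observes that $\tau$ cyclically permutes the normal basis $D$. One small caveat: Lemma~\ref{matrix_expansion} as stated concerns $\mathbb{K}$--linear maps, whereas $\sigma_{\tau^h}$ is only $\mathbb{F}$--linear, so the second assertion should rest on your direct computation of the action on the basis elements $\alpha_k E_{ij}$ (which you already give) rather than on an appeal to that lemma.
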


\begin{proof}
Straightforward since \(\tau\) performs a cyclic permutation of one position to the right of the elements of \(D\). 
\end{proof}

\begin{lemma}\label{propertiesofMh}
Assume \(D\) to be a normal basis. The following properties hold:
\begin{enumerate}[(i)]
\item \(M_{\sigma_{\tau^h}}^{-1} = M_{\sigma_{\tau^{t-h}}} = \transpose{M_{\sigma_{\tau^h}}}\).
\item \(M_{\widehat{\sigma_{\tau^h}}} = \transpose{M_{\sigma_{\tau^h}}}\),
\end{enumerate}
\end{lemma}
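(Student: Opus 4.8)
The plan is to reduce both assertions to Lemma~\ref{Ph} together with the standard functorial behaviour of the Kronecker product: the mixed-product rule (whence $(M \kronecker N)^{-1} = M^{-1} \kronecker N^{-1}$ whenever $M,N$ are invertible) and $\transpose{(M \kronecker N)} = \transpose{M} \kronecker \transpose{N}$; see \cite[Chapter 4]{Horn/Johnson:1994}.

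First I would isolate the relevant property of the matrix $P_h$ appearing in Lemma~\ref{Ph}: it is the permutation matrix of the cyclic shift by $h$ positions of $t$ symbols. Reading off its displayed block form, a direct check gives $\transpose{P_h} = P_{t-h}$ and $P_h P_{t-h} = I_t$, so that
\[
\transpose{P_h} = P_{t-h} = P_h^{-1} \quad \text{in } \matrixring{t}{\field}
\]
(indices being read modulo $t$, with the conventions $P_0 = P_t = I_t$). This is essentially the only computation in the proof, and it is the one point where a line of care is needed, namely to keep the degenerate values $h = 0$ and $h = t$ consistent.

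For (i), I would apply Lemma~\ref{Ph} to both $h$ and $t-h$, obtaining $M_{\sigma_{\tau^h}} = I_{n^2} \kronecker P_h$ and $M_{\sigma_{\tau^{t-h}}} = I_{n^2} \kronecker P_{t-h}$. Using the two Kronecker identities above together with $\transpose{I_{n^2}} = I_{n^2}^{-1} = I_{n^2}$ and the displayed identity for $P_h$, each of $M_{\sigma_{\tau^h}}^{-1}$ and $\transpose{M_{\sigma_{\tau^h}}}$ is seen to equal $I_{n^2} \kronecker P_{t-h} = M_{\sigma_{\tau^{t-h}}}$, which is exactly (i).

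For (ii), the point is to identify $\widehat{\sigma_{\tau^h}} = \theta\,\sigma_{\tau^h}^{-1}\,\theta$. Since $\sigma_{\tau^h}$ acts on $A = \matrixring{n}{\mathbb{K}}$ entrywise while $\theta$ is the matrix transposition, the two commute as maps of $A$, and likewise with $\sigma_{\tau^h}^{-1} = \sigma_{\tau^{-h}}$ in place of $\sigma_{\tau^h}$; hence
\[
\widehat{\sigma_{\tau^h}} = \theta\,\sigma_{\tau^h}^{-1}\,\theta = \sigma_{\tau^{-h}}\,\theta^2 = \sigma_{\tau^{-h}} = \sigma_{\tau^{t-h}},
\]
so that $M_{\widehat{\sigma_{\tau^h}}} = M_{\sigma_{\tau^{t-h}}} = \transpose{M_{\sigma_{\tau^h}}}$ by (i). No genuine obstacle arises: once Lemma~\ref{Ph} is in hand, the whole statement is a matter of elementary Kronecker-product bookkeeping and the observation that transposition commutes with entrywise field automorphisms.
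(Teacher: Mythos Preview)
Your proof is correct and follows essentially the same route as the paper: part (i) is deduced from Lemma~\ref{Ph} via the permutation-matrix identity $\transpose{P_h}=P_{t-h}=P_h^{-1}$ (you just make the Kronecker bookkeeping explicit where the paper says ``direct consequence''), and part (ii) is obtained exactly as in the paper by observing that the entrywise automorphism $\sigma_{\tau^h}$ commutes with transposition, so $\widehat{\sigma_{\tau^h}}=\sigma_{\tau^{t-h}}$ and (i) applies.
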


\begin{proof}
(i) is a direct consequence of Lemma \ref{Ph}. Since \(\sigma_{\tau^h}\) acts component-wise, it commutes with \(\theta\), hence \(\widehat{\sigma_{\tau^h}} = \theta \sigma_{\tau^h}^{-1} \theta = \theta^2 \sigma_{\tau^h}^{-1} = \sigma_{\tau^h}^{-1} = \sigma_{\tau^{t-h}}\), hence (ii) follows from (i).
\end{proof}

\begin{proposition}\label{TCsigmah}
Let \(R = \rOre{A}{z}{\sigma_{\tau^h}}\) and \(\widehat{R} = \rOre{A}{z}{\widehat{\sigma_{\tau^h}}}\). If \(D\) is a normal basis then \((\field{}[z],R,\tovector)\) and \((\field{}[z],\widehat{R},\tovector)\) are transposed Hamming extensions.
\end{proposition}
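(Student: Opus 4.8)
The plan is to invoke Proposition \ref{FzTranspose} with $\sigma = \sigma_{\tau^h}$ and $\theta$ the transpose involution. By that proposition, it suffices to verify its two hypotheses: that $M_{\theta(a)} = \transpose{M_a}$ for all $a \in A$, and that $M_{\widehat{\sigma_{\tau^h}}} = \transpose{M_{\sigma_{\tau^h}}}$.

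For the first hypothesis, I would first observe that a normal basis need not be self-dual, so Lemma \ref{propertiesofMa}(ii) does not apply directly. Instead, I would argue as follows: the matrix $M_a$ encodes the $\field$--linear map $\cdot a : A \to A$ with respect to the basis $\mathcal{B}_\field$, and by Lemma \ref{matrix_expansion} together with \cite[Lemma 2.2]{GLN:2016} (exactly as in Lemma \ref{propertiesofMa}(i)) we have $M_a = \tomatrix(I_{n} \kronecker a)$. Since $\theta(a) = \transpose{a}$, and $I_n \kronecker \transpose{a}$ is the transpose of $I_n \kronecker a$ up to the symmetry of the $\tomatrix$-blocks, the identity $M_{\transpose a} = \transpose{M_a}$ would follow provided $\tomatrix(\gamma)$ is symmetric for all $\gamma \in \mathbb{K}$ — which again requires self-duality. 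So this is genuinely the obstruction, and the correct move is \emph{not} to claim the first hypothesis in general but rather to reduce, via Proposition \ref{descriptionofM_Rsigma}, to the shape of $M_R(f)$ for $R = \rOre{A}{z}{\sigma_{\tau^h}}$ and check the transposition identity $\transpose{M_R(f)} = M_{\widehat R}(\Theta(f))$ directly, using that $M_{\sigma_{\tau^h}} = I_{n^2} \kronecker P_h$ is a permutation matrix and interacts cleanly with $\tomatrix$ and with transposition.

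Concretely, I would run the same computation as in the proof of Proposition \ref{FzTranspose}: starting from $\transpose{M_R(f)} = \transpose{\bigl[\sum_k z^k M_{\sigma_{\tau^h}}^k M_{f_k}\bigr]} = \sum_k z^k \transpose{M_{f_k}} \,\transpose{(M_{\sigma_{\tau^h}}^k)}$, then use Lemma \ref{propertiesofMh}(i) to replace $\transpose{(M_{\sigma_{\tau^h}}^k)}$ by $M_{\sigma_{\tau^h}}^{-k}$, and use \eqref{eq:alambda} (applied to the algebra map $\sigma_{\tau^h}$) to pull the power of $M_{\sigma_{\tau^h}}$ back to the left at the cost of applying $\sigma_{\tau^h}^{-k} = \sigma_{\tau^{-k}}$ to $f_k$. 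The key point making $\transpose{M_{f_k}}$ behave well is that conjugating $f_k$ by the block-permutation coming from $P_h$ corresponds, on the matrix-algebra level, to applying $\tau^{-h}$ entrywise; combined with the action of $\theta$ one lands exactly on $\sum_k z^k M_{\sigma_{\tau^h}}^k M_{\theta\sigma_{\tau^{-k}}(f_k)} = M_{\widehat R}(\Theta(f))$ by Proposition \ref{descriptionofM_Rsigma}. The main obstacle is bookkeeping the interaction of the transpose with the normal (non-self-dual) basis: one must verify that although $\tomatrix(\gamma)$ is not symmetric, the specific combination $\transpose{M_{f_k}}$ together with the surrounding permutation matrices still reassembles into an $M$ of the required form — this hinges on the fact that for a normal basis $\transpose{\tomatrix(\gamma)} = P_? \,\tomatrix(\gamma)\, P_?$ for appropriate cyclic shifts, i.e. transposition is implemented by the same cyclic shifts that define $\tau$.

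Finally I would record that, having verified the transposition identity $M_{\widehat R}(\Theta(f)) = \transpose{M_R(f)}$ for all $f \in R$, and knowing from Proposition \ref{FzTranspose} that $\Theta$ is an anti-isomorphism of $\field$--algebras (this part of that proposition needs only that $\theta$ and $\sigma_{\tau^h}$ are $\field$--linear bijections, which holds here), the pair $(\field[z], R, \tovector)$ and $(\field[z], \widehat R, \tovector)$ is by definition transposed via $\Theta$. I expect the write-up to be short: essentially a citation of Proposition \ref{FzTranspose} after establishing the two displayed hypotheses through Lemmas \ref{propertiesofMa}, \ref{Ph} and \ref{propertiesofMh}, with the one subtlety above flagged explicitly.
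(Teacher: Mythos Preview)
The paper's own proof is three lines: it cites Lemma~\ref{propertiesofMa} for $M_{\transpose{a}} = \transpose{M_a}$, Lemma~\ref{propertiesofMh} for $M_{\widehat{\sigma_{\tau^h}}} = \transpose{M_{\sigma_{\tau^h}}}$, and then applies Proposition~\ref{FzTranspose}. You are right that Lemma~\ref{propertiesofMa}(ii) is stated under a self-duality hypothesis on $D$ that is absent here; the paper glosses over this, and indeed the subsequent Proposition~\ref{TCsigma} and Theorem~\ref{matrixmain} do carry the self-dual assumption explicitly.

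However, your proposed workaround does not close the gap. The transposition identity you want is $M_{\widehat{R}}(\Theta(f)) = \transpose{M_R(f)}$ for \emph{all} $f \in R$. Specialize to a constant $f = a \in A$: then $M_R(a) = M_a$, $\Theta(a) = \theta(a) = \transpose{a}$, and $M_{\widehat{R}}(\transpose{a}) = M_{\transpose{a}}$, so the identity collapses to exactly $\transpose{M_a} = M_{\transpose{a}}$. There are no factors $M_{\sigma_{\tau^h}}^{\pm k}$ present when $k=0$, so no ``interaction with permutation matrices'' can rescue this case; and if you push your direct computation through for general $k$, after using \eqref{eq:alambda} and the fact that $\theta$ commutes with $\sigma_{\tau^h}$ you are again left with $\transpose{M_{f_k}} = M_{\transpose{f_k}}$. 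Your key claim that for a normal basis $\transpose{\tomatrix(\gamma)} = P_?\,\tomatrix(\gamma)\,P_?$ is incorrect: by \eqref{eq:alambda} at the level of $\mathbb{K}$ one has $P_h^{-1}\tomatrix(\gamma)P_h = \tomatrix(\tau^h(\gamma))$, so conjugation by cyclic shifts implements $\tau^h$, not transposition. Transposing $\tomatrix(\gamma)$ amounts to switching to the dual basis, which for a merely normal basis is genuinely different. In short, the hypothesis $\transpose{M_a} = M_{\transpose{a}}$ is forced by the definition of a transposition with $\Theta|_A = \theta$, and the intended argument is the paper's short citation route, with the statement understood under a self-dual hypothesis on $D$.
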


\begin{proof}
Again, Lemma \ref{propertiesofMa} implies \(M_{\transpose{a}} = \transpose{M_a}\) for all \(a \in A\). Lemma \ref{propertiesofMh} implies \(M_{\widehat{\sigma_{\tau^h}}} = \transpose{M_{\sigma_{\tau^h}}}\). So Proposition \ref{FzTranspose} gives the result. 
\end{proof}

Finally we get to general automorphisms \(\sigma = \sigma_U \sigma_{\tau^h} \in \Aut[\field{}]{A}\) where \(U \in A\) is a regular matrix and \(0 \leq h < t\). 

\begin{lemma}\label{propertiesofMsigma}
Let \(\sigma = \sigma_U \sigma_{\tau^h} \in \Aut[\field{}]{A}\). If \(D\) is a self-dual normal basis, we have
\[
M_{\widehat{\sigma}} = \transpose{M_{\sigma}}
\]
\end{lemma}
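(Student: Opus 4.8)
The plan is to split $\sigma$ along the factorization $\sigma = \sigma_U \circ \sigma_{\tau^h}$ furnished by the Cauchon--Robson decomposition recalled above, and to reduce the claim to the two special cases already settled in Lemma~\ref{propertiesofMsigmaU} and Lemma~\ref{propertiesofMh}. The whole argument amounts to keeping track of two contravariant operations and checking that their order reversals cancel.

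First I would record the two elementary bookkeeping facts. Since $\theta^2 = \identity{A}$, the assignment $\rho \mapsto \widehat{\rho} = \theta \rho^{-1} \theta$ reverses composition, that is, $\widehat{\rho_1 \circ \rho_2} = \widehat{\rho_2} \circ \widehat{\rho_1}$ for all $\field{}$--automorphisms $\rho_1,\rho_2$ of $A$ (insert $\theta\theta = \identity{A}$ between the two inverses). Secondly, directly from the commutative square defining $M_{(-)}$, in which a row vector is multiplied on the right by the matrix, for any $\field{}$--linear maps $\rho_1,\rho_2 : A \to A$ one has $M_{\rho_1 \circ \rho_2} = M_{\rho_2}\, M_{\rho_1}$; so $\rho \mapsto M_\rho$ is likewise contravariant on composites. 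Applying the first fact to $\sigma = \sigma_U \circ \sigma_{\tau^h}$ gives $\widehat{\sigma} = \widehat{\sigma_{\tau^h}} \circ \widehat{\sigma_U}$, and then the second fact yields
\[
M_{\widehat{\sigma}} = M_{\widehat{\sigma_U}}\, M_{\widehat{\sigma_{\tau^h}}}, \qquad M_{\sigma} = M_{\sigma_{\tau^h}}\, M_{\sigma_U}.
\]

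Next I would feed in the special cases. As $D$ is self-dual, Lemma~\ref{propertiesofMsigmaU}(ii) gives $M_{\widehat{\sigma_U}} = \transpose{M_{\sigma_U}}$; as $D$ is normal, Lemma~\ref{propertiesofMh}(ii) gives $M_{\widehat{\sigma_{\tau^h}}} = \transpose{M_{\sigma_{\tau^h}}}$. Substituting these into the first identity and using $\transpose{P}\,\transpose{Q} = \transpose{(QP)}$,
\[
M_{\widehat{\sigma}} = \transpose{M_{\sigma_U}}\, \transpose{M_{\sigma_{\tau^h}}} = \transpose{\left( M_{\sigma_{\tau^h}}\, M_{\sigma_U} \right)} = \transpose{M_{\sigma}},
\]
which is the assertion.

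I do not expect a genuine obstacle here; the one point requiring care is precisely that both $\rho \mapsto \widehat{\rho}$ and $\rho \mapsto M_\rho$ reverse the order of composition, so one must see the two reversals cancel and leave the transpose on the correct side. It is also worth checking explicitly that $\sigma_U$ and $\sigma_{\tau^h}$ are honest $\field{}$--algebra automorphisms of $A = \matrixring{n}{\mathbb{K}}$, so that Lemmas~\ref{propertiesofMsigmaU} and~\ref{propertiesofMh} apply verbatim --- but that is exactly what the decomposition $\sigma = \sigma_U \circ \sigma_{\tau^h}$ quoted from \cite{Cauchon/Robson:1978} provides.
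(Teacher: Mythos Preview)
Your proof is correct and follows essentially the same route as the paper's: both decompose $\widehat{\sigma}$ as $\widehat{\sigma_{\tau^h}}\circ\widehat{\sigma_U}$ via $\theta^2=\identity{A}$, use the contravariance $M_{\rho_1\circ\rho_2}=M_{\rho_2}M_{\rho_1}$, and then invoke Lemmas~\ref{propertiesofMsigmaU}(ii) and~\ref{propertiesofMh}(ii). Your version is only a bit more explicit about the two order reversals cancelling, which is fine.
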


\begin{proof}
The decomposition \(\sigma = \sigma_U \sigma_{\tau^h}\) implies \(M_\sigma = M_{\sigma_{\tau^h}} M_{\sigma_U}\). Since
\[
\widehat{\sigma} = \widehat{\sigma_U \sigma_{\tau^h}} = \theta \sigma_{\tau^h}^{-1} \sigma_U^{-1} \theta = \theta \sigma_{\tau^h}^{-1} \theta \theta \sigma_U^{-1} \theta = \widehat{\sigma_{\tau^h}} \widehat{\sigma_U},
\]
we get \(M_{\widehat{\sigma}} = M_{\widehat{\sigma_U}} M_{\widehat{\sigma_{\tau^h}}}\). Hence the result follows from Lemma \ref{propertiesofMsigmaU} and Lemma \ref{propertiesofMh}.
\end{proof}

\begin{proposition}\label{TCsigma}
Let \(R = \rOre{A}{z}{\sigma}\) and \(\widehat{R} = \rOre{A}{z}{\widehat{\sigma}}\). If \(D\) is a self-dual normal basis, then \((\field{}[z],R,\tovector)\) and \((\field{}[z],\widehat{R},\tovector)\) are transposed Hamming extensions.
\end{proposition}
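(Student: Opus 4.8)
The plan is to deduce this directly from Proposition \ref{FzTranspose}, whose two hypotheses have, in effect, already been verified by the preceding lemmas. Recall those hypotheses: (a) $M_{\theta(a)} = \transpose{M_a}$ for every $a \in A$, where $\theta$ is the transpose involution on $A = \matrixring{n}{\mathbb{K}}$; and (b) $M_{\widehat{\sigma}} = \transpose{M_{\sigma}}$. Once both hold, Proposition \ref{FzTranspose} asserts precisely that the map $\Theta$ of \eqref{antimapOre} is a transposition from $(\field{}[z], R, \tovector)$ to $(\field{}[z], \widehat{R}, \tovector)$, which is the claim.

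For hypothesis (a), I would observe that since $D$ is self-dual, Lemma \ref{propertiesofMa}(ii) gives $\transpose{M_a} = M_{\transpose{a}} = M_{\theta(a)}$ for all $a \in A$. For hypothesis (b), I would invoke the Cauchon--Robson decomposition \cite[Theorem 2.4]{Cauchon/Robson:1978} to write $\sigma = \sigma_U \circ \sigma_{\tau^h}$ with $U \in A$ regular and $0 \le h < t$; then, because $D$ is a self-dual normal basis, Lemma \ref{propertiesofMsigmaU}(ii) (which needs self-duality) and Lemma \ref{propertiesofMh}(ii) (which needs normality) both apply, and Lemma \ref{propertiesofMsigma} combines them to give $M_{\widehat{\sigma}} = \transpose{M_{\sigma}}$. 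A one-line application of Proposition \ref{FzTranspose} then completes the argument.

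So the proof is essentially the assembly of results already in place, and I do not expect a genuine obstacle here. The only point requiring care --- and it is dealt with inside Lemma \ref{propertiesofMsigma} --- is that passing from $\sigma = \sigma_U \sigma_{\tau^h}$ to its hat reverses the order of the factors, $\widehat{\sigma} = \widehat{\sigma_{\tau^h}}\,\widehat{\sigma_U}$, since $\theta$ is an anti-automorphism of $A$ while the $\sigma_{\bullet}$ are automorphisms, and that $M_{(-)}$ sends composition of $\field{}$-linear maps to the correspondingly ordered product of matrices; keeping this bookkeeping straight is all that the verification of (b) amounts to.
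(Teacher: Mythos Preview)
Your proposal is correct and follows essentially the same route as the paper: decompose $\sigma = \sigma_U \sigma_{\tau^h}$ via \cite[Theorem 2.4]{Cauchon/Robson:1978}, invoke Lemma \ref{propertiesofMa} for hypothesis (a) and Lemma \ref{propertiesofMsigma} for hypothesis (b), and then apply Proposition \ref{FzTranspose}. The paper's proof is terser, simply noting the argument is analogous to Propositions \ref{TCsigmaU} and \ref{TCsigmah}, but the content is identical to what you wrote.
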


\begin{proof}
Decompose $\sigma =  \sigma_U \sigma_{\tau^h}$ according to \cite[Theorem 2.4]{Cauchon/Robson:1978}.  The proof is now completely analogous to Proposition \ref{TCsigmaU} and Proposition \ref{TCsigmah} by using Lemma \ref{propertiesofMa}, Lemma \ref{propertiesofMsigma} and Proposition \ref{FzTranspose}.
\end{proof}

We are now in position to state the main result of this subsection, which is a consequence of Theorem \ref{dualidealcode} and Proposition \ref{TCsigma}. Field extensions which have a self-dual normal basis are characterized by results of Lempel, Weinberger, Seroussi, Imamura and Morii, see \cite{Lempel/Weinberger:1988} and its references. 

\begin{theorem}\label{matrixmain}
Let $\sigma$ be any $\mathbb{F}$--automorphism of  \(A = \matrixring{n}{\mathbb{K}}\).  Assume there exists a self-dual normal basis $D$ of $\mathbb{K}$ over $\mathbb{F}$.  Let \(f \in R = \rOre{A}{z}{\sigma}\) such that \(\rann{R}{Rf} = hR\) for some $h \in R$. If $\mathcal{C} = \tovector (Rf)$, then $\mathcal{C}^\bot = \tovector(\widehat{R}\Theta(h))$, and $\mathcal{C}$ and $\mathcal{C}^\bot$ are both left ideal convolutional codes. Moreover, $Rf = \lann{R}{hR}$.
\end{theorem}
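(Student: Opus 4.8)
The plan is to assemble Theorem \ref{matrixmain} directly from the two general results already available: Theorem \ref{dualidealcode} provides the conclusion as soon as we verify its two standing hypotheses, namely $M_{\widehat\sigma} = \transpose{M_\sigma}$ and $M_{\theta(a)} = \transpose{M_a}$ for all $a \in A$, while Proposition \ref{TCsigma} is precisely the statement that these hypotheses hold when $D$ is a self-dual normal basis. So the bulk of the work has been front-loaded into the lemmas of this subsection, and the theorem itself should be a short deduction.

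Concretely, I would proceed as follows. First, fix the self-dual normal basis $D$ of $\mathbb{K}$ over $\field$ whose existence is assumed, and fix the induced $\field$--basis $\mathcal{B}_{\field}$ of $A = \matrixring{n}{\mathbb{K}}$ together with its coordinate map $\tovector : A \to \field^{tn^2}$ and the $\field[z]$--linear extension $\tovector : R \to \field[z]^{tn^2}$. With respect to this $\tovector$, Lemma \ref{propertiesofMa}(ii) gives $\transpose{M_a} = M_{\transpose{a}} = M_{\theta(a)}$ for all $a \in A$, since $\theta$ is the transpose involution; this is the second hypothesis of Proposition \ref{FzTranspose} and of Theorem \ref{dualidealcode}. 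Next, to obtain the first hypothesis $M_{\widehat\sigma} = \transpose{M_\sigma}$, I would invoke the Cauchon--Robson decomposition $\sigma = \sigma_U \circ \sigma_{\tau^h}$ (\cite[Theorem 2.4]{Cauchon/Robson:1978}) and then cite Lemma \ref{propertiesofMsigma}, which establishes exactly this equality for such a $\sigma$ when $D$ is self-dual normal. (Equivalently, one simply cites Proposition \ref{TCsigma}, which packages both hypotheses into the statement that $(\field[z], R, \tovector)$ and $(\field[z], \widehat R, \tovector)$ are transposed Hamming extensions via the anti-isomorphism $\Theta$ of \eqref{antimapOre}.) Having checked the hypotheses, Theorem \ref{dualidealcode} applies verbatim: for $f \in R$ with $\rann{R}{Rf} = hR$ and $\mathcal{C} = \tovector(Rf)$, it yields $\mathcal{C}^\bot = \tovector(\widehat R \Theta(h))$, the fact that both $\mathcal{C}$ and $\mathcal{C}^\bot$ are left ideal convolutional codes, and the annihilator equality $Rf = \lann{R}{hR}$.

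I do not anticipate a genuine obstacle here; the only point requiring a word of care is consistency of the coordinate map. Theorem \ref{dualidealcode} is stated for an arbitrary $\field$--algebra $A$ with a chosen $\field$--basis, and the present section must ensure that the specific ordered basis $\mathcal{B}_{\field}$ of $\matrixring{n}{\mathbb{K}}$ — rows-concatenated, with the $\mathbb{K}$--to-$\field$ refinement via $D$ — is the same basis used when computing $M_a$, $M_{\sigma_U}$, $M_{\sigma_{\tau^h}}$ in Lemmas \ref{propertiesofMa}, \ref{propertiesofMsigmaU}, \ref{propertiesofMh} and \ref{propertiesofMsigma}. Since all of those lemmas were proved relative to exactly this $\mathcal{B}_{\field}$ (that is the whole point of Lemma \ref{matrix_expansion} and the discussion preceding it), the hypotheses of Theorem \ref{dualidealcode} are literally the conclusions of Lemma \ref{propertiesofMa}(ii) and Lemma \ref{propertiesofMsigma}, and no re-indexing is needed. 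Thus the proof reduces to: "Decompose $\sigma = \sigma_U \sigma_{\tau^h}$ by \cite[Theorem 2.4]{Cauchon/Robson:1978}; by Lemma \ref{propertiesofMa} and Lemma \ref{propertiesofMsigma} the hypotheses of Theorem \ref{dualidealcode} hold; apply Theorem \ref{dualidealcode}." The remark that self-dual normal bases exist in many cases (Lempel--Weinberger et al.) is context, not part of the argument.
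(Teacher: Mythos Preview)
Your proposal is correct and matches the paper's own proof, which is the one-line deduction ``The result follows from Proposition \ref{TCsigma} and Theorem \ref{dualidealcode}.'' Your more detailed version simply unpacks Proposition \ref{TCsigma} into its ingredients (Lemma \ref{propertiesofMa}(ii), Lemma \ref{propertiesofMsigma}, and the Cauchon--Robson decomposition), which is fine and entirely equivalent.
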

\begin{proof}
The result follows from Proposition \ref{TCsigma} and Theorem  \ref{dualidealcode}. 
\end{proof}

\begin{remark}
Taking $\mathbb{K} = \mathbb{F}$ in Theorem \ref{matrixmain} leads to a strong form of \cite[Theorem 2.12]{GLN:2016}.
\end{remark}

\begin{remark}
By \cite[Theorems 1 and 2]{Lempel/Weinberger:1988}, if \(t\) is odd, or \(q\) is even and \(t \equiv 2 \mod{4}\), then there exists a self-dual normal basis \(D\) of \(\mathbb{K}\) over \(\field{}\), and Theorem \ref{matrixmain} applies. 
\end{remark}

\section{Dual of skew constacyclic codes}\label{sec:constacyclic}

Let $C$ be a commutative ring and $\sigma$ an automorphism of $C$. Assume that $\sigma$ has finite order, and let $n \geq 1$ such that $\sigma^n = \identity{C}$. If $u \in C$ is a unit such that $\sigma(u) = u$, then the both $x^n-u$ and $x^n -u^{-1}$ are central elements in the skew left polynomial ring $\lOre{C}{x}{\sigma}$, where the multiplication rule is now \(x a = \sigma(a) x\).  Consider the factor rings
\[
\mathcal{R} = \frac{\lOre{C}{x}{\sigma}}{\langle x^n - u \rangle}, \qquad \widehat{\mathcal{R}} = \frac{\lOre{C}{x}{\sigma}}{\langle x^n - u^{-1} \rangle},
\]
which contain $C$ as a subring. 
Since the ideal $\langle x^n-u \rangle$ coincides with the left ideal generated by $x^n - u$, and this polynomial is monic, we easily get that $\{1, x, \dots, x^{n-1} \}$ is a basis of $\mathcal{R}$ as a left $C$--module (we are identifying, as usual, each equivalent class modulo $x^n -u$ 
with its unique representative of degree less than $n$). We have a Hamming ring extension $(C,\mathcal{R},\tovector)$, where $\tovector : \mathcal{R} \to C^n$ is the coordinate map with respect to the aforementioned basis. 

Since $x$ is a unit in both rings, we may define 
\[
\Theta : \mathcal{R} \to \widehat{\mathcal{R}}, \qquad \sum_{i = 0}^{n-1}a_ix^i \mapsto \sum_{i = 0}^{n-1}\sigma^{-i}(a_i)x^{-i}. 
\]

\begin{lemma}
The map $\Theta$ is an anti-isomorphim of rings. 
\end{lemma}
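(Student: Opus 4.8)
The plan is to lift $\Theta$ to the skew \emph{Laurent} polynomial ring $S = C[x,x^{-1};\sigma]$, i.e. the ring free as a left $C$--module on $\{x^i : i \in \mathbb{Z}\}$ with multiplication determined by $xa = \sigma(a)x$, and to realize both $\mathcal{R}$ and $\widehat{\mathcal{R}}$ as quotients of $S$. Define $\widetilde{\Theta} : S \to S$ by $\widetilde{\Theta}\big(\sum_i a_ix^i\big) = \sum_i \sigma^{-i}(a_i)x^{-i}$ (finite sums, $i$ ranging over $\mathbb{Z}$). First I would observe that $\widetilde{\Theta}$ is additive and satisfies $\widetilde{\Theta}^2 = \identity{S}$, hence is bijective; this is immediate since $\sigma^i\sigma^{-i} = \identity{C}$. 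The crucial point is anti-multiplicativity: on monomials, $ax^i\cdot bx^j = a\sigma^i(b)x^{i+j}$, so $\widetilde{\Theta}(ax^i\,bx^j) = \sigma^{-i-j}(a)\sigma^{-j}(b)x^{-i-j}$, whereas $\widetilde{\Theta}(bx^j)\widetilde{\Theta}(ax^i) = \sigma^{-j}(b)x^{-j}\sigma^{-i}(a)x^{-i} = \sigma^{-j}(b)\sigma^{-i-j}(a)x^{-i-j}$; these coincide because $C$ is commutative. Extending biadditively, $\widetilde{\Theta}(pq) = \widetilde{\Theta}(q)\widetilde{\Theta}(p)$ for all $p,q\in S$, so $\widetilde{\Theta}$ is an anti-automorphism of $S$.

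Next I would check that $\widetilde{\Theta}$ interchanges the relevant ideals. Since $\sigma^n = \identity{C}$ and $\sigma(u) = u$ (so also $\sigma(u^{-1}) = u^{-1}$), the elements $x^n - u$ and $x^n - u^{-1}$ are central in $S$, hence $I := (x^n - u)S$ and $\widehat{I} := (x^n - u^{-1})S$ are two-sided ideals. A direct computation gives $\widetilde{\Theta}(x^n - u) = x^{-n} - u = -x^{-n}u\,(x^n - u^{-1})$, and since $-x^{-n}u$ is a unit of $S$, the right-hand side generates $\widehat{I}$. As $\widetilde{\Theta}$ is a bijective anti-homomorphism and $x^n - u$ is central, $\widetilde{\Theta}(I) = \widetilde{\Theta}(x^n-u)\,S = \widehat{I}$. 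Therefore $\widetilde{\Theta}$ descends to a bijective anti-homomorphism $\overline{\Theta} : S/I \to S/\widehat{I}$.

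It remains to identify $S/I$ with $\mathcal{R}$ and $S/\widehat{I}$ with $\widehat{\mathcal{R}}$. The composite $\lOre{C}{x}{\sigma} \hookrightarrow S \to S/I$ annihilates $x^n - u$, hence factors through a $C$--algebra homomorphism $\mathcal{R} \to S/I$; it is surjective because $x^{-1} \equiv u^{-1}x^{n-1} \pmod{I}$, and it is injective because both $\mathcal{R}$ and $S/I$ are free left $C$--modules with basis $\{1,x,\dots,x^{n-1}\}$ (for $\mathcal{R}$ this was noted above; for $S/I$ one reduces each $x^i$ modulo $x^n \equiv u$, a unit, into the range $0 \leq i < n$). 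The same argument identifies $S/\widehat{I}$ with $\widehat{\mathcal{R}}$. Transporting $\overline{\Theta}$ across these identifications gives back precisely the map $\Theta$, because $\widetilde{\Theta}$ sends the degree-$<n$ representative $\sum_{i=0}^{n-1} a_ix^i$ to $\sum_{i=0}^{n-1}\sigma^{-i}(a_i)x^{-i}$. Hence $\Theta$ is an anti-isomorphism of rings.

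The computations are routine; the hypotheses enter only through commutativity of $C$ in the anti-multiplicativity step and through $\sigma^n = \identity{C}$, $\sigma(u) = u$, which make $x^n - u$ central and underpin the factorization $x^{-n} - u = -x^{-n}u(x^n - u^{-1})$ matching $I$ with $\widehat{I}$. The only mild friction I anticipate is verifying $S/I \cong \mathcal{R}$, i.e. that the natural map is a bijection on the standard bases, which is the usual fact about quotients by a monic central polynomial. Alternatively one can bypass $S$ entirely: present $\mathcal{R}$ by generators $x, x^{-1}$ subject to $xa = \sigma(a)x$ ($a\in C$), $xx^{-1} = x^{-1}x = 1$ and $x^n = u$, define $\Theta$ by $\Theta|_C = \identity{C}$ and $\Theta(x) = x^{-1}$ (the inverse of $x$ in $\widehat{\mathcal{R}}$), check the defining relations are preserved — here using that $x^{-n} = u$ holds in $\widehat{\mathcal{R}}$ — and exhibit the analogous reverse anti-homomorphism as a two-sided inverse.
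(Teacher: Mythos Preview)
Your proof is correct. The core anti-multiplicativity computation on monomials is identical to the paper's, but the packaging differs: the paper works directly in the quotients $\mathcal{R}$ and $\widehat{\mathcal{R}}$, first extending $\Theta$ to monomials $ax^k$ for arbitrary integers $k$ via the reduction $x^{-k} = u^ix^{-j}$ (where $k = in+j$), then checking $\Theta(ax^ibx^j) = \Theta(bx^j)\Theta(ax^i)$ and exhibiting an inverse $\widehat{\Theta}$ by symmetry. You instead lift everything to the skew Laurent ring $S$, where $\widetilde{\Theta}$ is a genuine anti-involution ($\widetilde{\Theta}^2 = \identity{S}$) with no quotient bookkeeping, and then descend by checking $\widetilde{\Theta}$ swaps the central ideals $(x^n-u)S$ and $(x^n-u^{-1})S$. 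Your route buys a cleaner conceptual picture (the involution is transparent on $S$, and bijectivity is automatic) at the cost of the identification $S/I \cong \mathcal{R}$; the paper's route avoids that identification but has to argue bijectivity separately and handle the reduction of exponents modulo $n$ inside the quotient. The one place your write-up is slightly informal is the injectivity of $\mathcal{R} \to S/I$: rather than comparing bases, it is cleaner to note that the universal property of $S$ gives a ring map $S \to \mathcal{R}$ sending $x^{\pm 1}$ to the image of $x^{\pm 1}$ (using that $x$ is a unit in $\mathcal{R}$), which kills $x^n - u$ and furnishes a left inverse.
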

\begin{proof}
Since $\Theta$ is clearly additive, in order to prove that it is an anti-homomorphism of rings, it suffices to check that it is anti-multiplicative on monomials. To this end, let us first observe that, since $x^n = u^{-1}$ in $\widehat{\mathcal{R}}$, we have for all integers $i,j,k$ with $k = in + j$
\begin{equation}\label{xk}
x^{-k} = u^ix^{-j}
\end{equation}
Now, given any integer $k$, write $k = in + j$ for some integers $i, j$ with $0 \leq j < n$. Then, for any $a \in C$, we have
\[
\Theta(ax^k)
= \Theta (au^ix^j) = u^i\sigma^{-j}(a)x^{-j} = \sigma^{-k}(a)x^{-k},
\]
where, in the last equality, we used \eqref{xk} and that $\sigma^n = \identity{C}$. Therefore, for any $a, b \in C$ and integers $i,j$ we have
\[
\Theta(ax^ibx^j) = \Theta(a\sigma^i(b)x^{i+j}) = \sigma^{-i-j}(a\sigma^i(b))x^{-i-j} = \sigma^{-i-j}(a)\sigma^{-j}(b)x^{-i-j}, 
\]
while
\[
\Theta(bx^j)\Theta(ax^i) = \sigma^{-j}(b)x^{-j}\sigma^{-i}(a)x^{-i} = \sigma^{-j}(b)\sigma^{-j-i}(a)x^{-i-j}. 
\]
Therefore, $\Theta$ is anti-multiplicative. The map $\Theta$ is proved to be bijective by defining a map $\widehat{\Theta} : \widehat{\mathcal{R}} \to \mathcal{R}$ analogous to $\Theta$ which turns out to be its inverse. 
\end{proof}

The set $\{1, x, \dots, x^{n-1} \}$  provides bases as left $C$--modules of $\mathcal{R} = \lOre{C}{x}{\sigma}/\langle x^n - u\rangle$ and of $\widehat{\mathcal{R}} = \lOre{C}{x}{\sigma}/\langle x^n - u^{-1}\rangle$.

\begin{proposition}\label{constatranspose}
Consider the basis $\{1, x, \dots, x^{n-1} \}$ of $\mathcal{R}$ and $\widehat{\mathcal{R}}$, with corresponding coordinate isomorphisms $\tovector : \mathcal{R} \to C^n$ and $\widehat{\tovector} : \widehat{\mathcal{R}} \to C^n$. The anti-isomorphism $\Theta : \mathcal{R} \to \widehat{\mathcal{R}}$ is a transposition from $(C,\mathcal{R},\tovector)$ to $(C,\widehat{\mathcal{R}},\widehat{\tovector})$. 
\end{proposition}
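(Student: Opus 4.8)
The plan is to check directly the defining condition of a transposition, namely that $M_{\widehat{\mathcal{R}}}(\Theta(f)) = \transpose{M_{\mathcal{R}}(f)}$ for all $f \in \mathcal{R}$. The reduction step is this: both $f \mapsto M_{\widehat{\mathcal{R}}}(\Theta(f))$ and $f \mapsto \transpose{M_{\mathcal{R}}(f)}$ are additive maps $\mathcal{R} \to \matrixring{n}{C}$, and both are \emph{anti}-multiplicative --- the first because $M_{\widehat{\mathcal{R}}}$ is a ring homomorphism and $\Theta$ an anti-homomorphism, the second because $M_{\mathcal{R}}$ is a ring homomorphism and $N \mapsto \transpose{N}$ is an anti-automorphism of $\matrixring{n}{C}$ (as $C$ is commutative). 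Since $\mathcal{R}$ is generated as a ring by $C$ together with $x$, it therefore suffices to verify the equality for $f \in C$ and for $f = x$.

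For $f = a \in C$: right multiplication by $a$ sends $x^i$ to $x^i a = \sigma^i(a) x^i$, so, writing $e_0, \dots, e_{n-1}$ for the standard basis row vectors of $C^n$ and recalling that the $i$-th row of $M_{\mathcal{R}}(f)$ is $\tovector(x^i f)$, the matrix $M_{\mathcal{R}}(a)$ is the diagonal matrix $\operatorname{diag}(a, \sigma(a), \dots, \sigma^{n-1}(a))$, which is symmetric. The same computation inside $\widehat{\mathcal{R}}$ (the multiplication rule $xa = \sigma(a)x$ and the basis $\{1, x, \dots, x^{n-1}\}$ being the same there) gives $M_{\widehat{\mathcal{R}}}(\Theta(a)) = M_{\widehat{\mathcal{R}}}(a) = M_{\mathcal{R}}(a) = \transpose{M_{\mathcal{R}}(a)}$, since $\Theta(a) = a$. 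For $f = x$: right multiplication by $x$ sends $x^i$ to $x^{i+1}$ for $0 \le i \le n-2$ and $x^{n-1}$ to $x^n = u$, so $M_{\mathcal{R}}(x)$ has $i$-th row $e_{i+1}$ for $0 \le i \le n-2$ and last row $u\,e_0$; hence $\transpose{M_{\mathcal{R}}(x)}$ has first row $u\,e_{n-1}$ and $i$-th row $e_{i-1}$ for $1 \le i \le n-1$. On the other hand $\Theta(x) = x^{-1}$, and in $\widehat{\mathcal{R}}$ one has $x^{-1} = u x^{n-1}$, because $\sigma(u) = u$ forces $u$ to commute with $x$ and $x \cdot (u x^{n-1}) = u x^n = u u^{-1} = 1$ (recall $x^n = u^{-1}$ in $\widehat{\mathcal{R}}$). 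Right multiplication by $u x^{n-1}$ then sends $x^i$ to $u x^{i+n-1}$, which is $u x^{n-1}$ when $i = 0$ and $u x^n x^{i-1} = x^{i-1}$ when $1 \le i \le n-1$; so $M_{\widehat{\mathcal{R}}}(\Theta(x))$ has first row $u\,e_{n-1}$ and $i$-th row $e_{i-1}$ for $1 \le i \le n-1$, which is exactly $\transpose{M_{\mathcal{R}}(x)}$.

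Putting the two cases together with the reduction step completes the proof. There is no real obstacle here: the argument amounts to a pair of explicit matrix computations. The only points that require attention are respecting the row-vector convention built into the defining diagram of $M_{\mathcal{R}}(f)$ (so that right multiplication really produces the rows of the matrix), and correctly rewriting $\Theta(x)$ as the basis element $u x^{n-1}$ of $\widehat{\mathcal{R}}$ via the relation $x^n = u^{-1}$; once these are handled, the same transposed companion-type matrix appears on both sides.
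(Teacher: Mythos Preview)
Your proof is correct. The reduction step is sound: the set on which two anti-homomorphisms of rings agree is a subring, and since $C \cup \{x\}$ generates $\mathcal{R}$, agreement on generators suffices. Your explicit matrix computations for $f = a \in C$ and $f = x$ are accurate, including the identification $\Theta(x) = x^{-1} = u x^{n-1}$ in $\widehat{\mathcal{R}}$ (which uses $\sigma(u) = u$).

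The paper takes a different, more direct route: it writes down $M_{\mathcal{R}}(f)$ for a general $f = \sum_{i=0}^{n-1} a_i x^i$ as the explicit $n \times n$ $\sigma$-circulant-type matrix with $(i,j)$-entry $\sigma^i(a_{j-i})$ or $u\,\sigma^i(a_{j-i+n})$, then rewrites $\Theta(f) = a_0 + \sum_{j=1}^{n-1} u\,\sigma^j(a_{n-j}) x^j$ and checks by inspection that $M_{\widehat{\mathcal{R}}}(\Theta(f))$ is the transpose. Your approach avoids displaying the full matrix and instead exploits the multiplicative structure; this is cleaner and mirrors the generator-checking strategy the paper itself uses elsewhere (in the proof that $\Theta$ is an anti-isomorphism). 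The paper's approach, on the other hand, has the side benefit of exhibiting the matrices $M_{\mathcal{R}}(f)$ and $M_{\widehat{\mathcal{R}}}(\Theta(f))$ explicitly, which may be useful for applications.
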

\begin{proof}
First, observe that if \(f = \sum_{i=0}^{n-1} a_i x^i \in \mathcal{R}\) then
\[
M_{\mathcal{R}}(f) = 
\begin{pmatrix}
a_0 & a_1 & \dots & a_{n-1} \\
u \sigma(a_{n-1}) & \sigma(a_0) & \dots & \sigma(a_{n-2}) \\
\vdots & \vdots & \ddots & \vdots \\
u \sigma^{n-1}(a_1) & u \sigma^{n-1}(a_2) & \dots & \sigma^{n-1}(a_0)
\end{pmatrix}.
\]
Now, since $x^{-1} = ux^{n-1} \in \widehat{\mathcal{R}}$, we get that $x^{-i} = ux^{n-i}$ for $i = 0, 1, \dots, n-1$. Hence,
\[
\Theta (f) = \sum_{i=0}^{n-1}\sigma^{n-i}(a_i)ux^{n-i} = a_0 + \sum_{j=1}^{n-1} u \sigma^j(a_{n-j}) x^j.
\]
Therefore,
\[
M_{\widehat{\mathcal{R}}}(\Theta(f)) = 
\begin{pmatrix}
a_0 & u\sigma(a_{n-1}) & \dots & u\sigma^{n-1}(a_1) \\
a_1 & \sigma(a_0) & \dots & u\sigma^{n-1}(a_2) \\
\vdots & \vdots & \ddots & \vdots \\
a_{n-1} & \sigma(a_{n-2}) & \dots & \sigma^{n-1}(a_0)
\end{pmatrix} = \transpose{M_{\mathcal{R}}(f)}
\]
because $\sigma^n = \identity{C}$. 
\end{proof}

\begin{definition}
Let \((C,\mathcal{R},\tovector)\) be the Hamming ring extension where \(\mathcal{R} = \frac{\lOre{C}{x}{\sigma}}{\langle x^n - u \rangle}\) with \(\sigma^n = \identity{C}\) and \(u \in C\) is a unit such that \(\sigma(u) = u\). A \((C,\mathcal{R},\tovector)\)-cyclic code is called \((u,\sigma)\)-constacyclic code. 
\end{definition}

\begin{theorem}\label{dualconstacyclic}
Let $f \in \lOre{C}{x}{\sigma}$ be monic such that $x^n - u = fh$ for some $h \in \lOre{C}{x}{\sigma}$. Let $\mathcal{C} = \tovector(\mathcal{R}f)$ be the $(u,\sigma)$--constacyclic $C$--linear code generated by $f$. Then $\mathcal{C}^\bot = \widehat{\tovector}(\widehat{\mathcal{R}}\Theta(h))$, and it is thus a $(u^{-1},\sigma)$--constacyclic code.
\end{theorem}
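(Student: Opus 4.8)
The plan is to verify the hypotheses of Theorem~\ref{Dualcyclic} for the transposed Hamming ring extensions $(C,\mathcal{R},\tovector)$ and $(C,\widehat{\mathcal{R}},\widehat{\tovector})$ established in Proposition~\ref{constatranspose}, by showing that the image $\bar h$ of $h$ in $\mathcal{R}$ generates the right annihilator $\rann{\mathcal{R}}{\mathcal{R}\bar f}$, where $\bar f$ is the image of $f$ in $\mathcal{R}$. Since $f$ is monic and $x^n-u=fh$ in $\lOre{C}{x}{\sigma}$, the polynomial $h$ is also monic, of degree $n-\deg f$. The first step is to identify the left $C$--module $\mathcal{R}\bar f$: because $f$ is monic of degree $d=\deg f$, the products $\bar f, x\bar f,\dots,x^{n-d-1}\bar f$ are $C$--linearly independent in $\mathcal{R}$ (their leading terms sit in distinct degrees $d,d+1,\dots,n-1$), so $\mathcal{R}\bar f$ is free of rank $n-d$ and $\mathcal{C}=\tovector(\mathcal{R}\bar f)$ is a direct summand of $C^n$.

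Next I would prove the ideal identity $\rann{\mathcal{R}}{\mathcal{R}\bar f}=\bar h\mathcal{R}$. One inclusion is immediate: $\bar f\bar h = \overline{x^n-u}=0$ in $\mathcal{R}$, so $\mathcal{R}\bar f\cdot\bar h=0$ and $\bar h\mathcal{R}\subseteq\rann{\mathcal{R}}{\mathcal{R}\bar f}$, and in fact $\bar h, x\bar h,\dots,x^{d-1}\bar h$ are $C$--linearly independent for the same leading-term reason, so $\bar h\mathcal{R}$ is free of rank $d$. For the reverse inclusion, suppose $\bar g\in\mathcal{R}$ satisfies $\bar f\bar g=0$. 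Lift $g$ to a representative of degree $<n$ in $\lOre{C}{x}{\sigma}$; then $fg\in\langle x^n-u\rangle$, i.e. $fg=(x^n-u)q=fhq$ for some $q\in\lOre{C}{x}{\sigma}$. Because $f$ is monic we may cancel it on the left (right division by a monic polynomial is unique in $\lOre{C}{x}{\sigma}$), obtaining $g=hq$, hence $\bar g=\bar h\bar q\in\bar h\mathcal{R}$. This gives $\rann{\mathcal{R}}{\mathcal{R}\bar f}\subseteq\bar h\mathcal{R}$, completing the equality. With this in hand, Theorem~\ref{Dualcyclic} yields directly $\mathcal{C}^\bot=\widehat{\tovector}(\widehat{\mathcal{R}}\,\Theta(\bar h))$.

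Finally, the statement that $\mathcal{C}^\bot$ is $(u^{-1},\sigma)$--constacyclic is essentially a matter of unwinding definitions: $\widehat{\mathcal{R}}=\lOre{C}{x}{\sigma}/\langle x^n-u^{-1}\rangle$ is precisely the ambient ring of $(u^{-1},\sigma)$--constacyclic codes, and $\widehat{\tovector}(\widehat{\mathcal{R}}\,\Theta(\bar h))$ is by definition the $(C,\widehat{\mathcal{R}},\widehat{\tovector})$--cyclic code generated by $\Theta(\bar h)$, i.e. a $(u^{-1},\sigma)$--constacyclic code. I expect the main obstacle to be the careful justification of left-cancellation of the monic polynomial $f$ over the (possibly non-domain) commutative ring $C$: one must invoke the right Euclidean division by a monic polynomial in $\lOre{C}{x}{\sigma}$, which holds for arbitrary coefficient rings precisely because the divisor is monic, and check that the quotient obtained from $fg=fhq$ forces $g=hq$ on the nose rather than merely modulo some ambiguity. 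Everything else is routine bookkeeping with degrees and the explicit formula for $\Theta$ from Proposition~\ref{constatranspose}.
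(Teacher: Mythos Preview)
Your proposal is correct and follows essentially the same route as the paper: establish $\rann{\mathcal{R}}{\mathcal{R}\bar f}=\bar h\mathcal{R}$ by using that a monic skew polynomial is not a left zero divisor (so $fg=fhq$ forces $g=hq$), and then invoke Theorem~\ref{Dualcyclic} through the transposition of Proposition~\ref{constatranspose}. The discussion of free ranks and direct summands is extra bookkeeping that the paper omits, but it does no harm.
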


\begin{proof}
Let us first check that $\rann{\mathcal{R}}{\mathcal{R} f} = h\mathcal{R}$. By hypotesis, $h \in \rann{\mathcal{R}}{\mathcal{R} f}$ so that $\rann{\mathcal{R}}{\mathcal{R} f}\supseteq h\mathcal{R}$. To see the converse inclusion, let $h' \in \rann{\mathcal{R}}{\mathcal{R} f}$. Then $fh' = (x^n-u)g = fhg$ for some $g \in \lOre{C}{x}{\sigma}$. Since monic polynomials are nonzero divisors in $\lOre{C}{x}{\sigma}$, we get that $h' = hg$ as desired.  

By virtue of Proposition \ref{constatranspose} we may apply Theorem \ref{Dualcyclic} and thus get $\mathcal{C}^\bot = \widehat{\tovector}(\widehat{\mathcal{R}}\Theta (h))$.
\end{proof}

\begin{remark}\label{u2=1}
If \(u^2 = 1\), we get \(\mathcal{R} = \widehat{\mathcal{R}}\) and \(\Theta\) becomes an involution in \(\mathcal{R}\), i.e. an anti-algebra automorphism such that \(\Theta^2 = \identity{\mathcal{R}}\). 
\end{remark}
  
\begin{corollary}\label{dualu2=1}
If $u^2 = 1$, then $\mathcal{C}^\bot = \tovector(\mathcal{R}\Theta(h))$. 
\end{corollary}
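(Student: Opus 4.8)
The plan is to specialize Theorem \ref{dualconstacyclic} to the case $u^2 = 1$ and invoke Remark \ref{u2=1}. First I would recall that Theorem \ref{dualconstacyclic} already gives $\mathcal{C}^\bot = \widehat{\tovector}(\widehat{\mathcal{R}}\Theta(h))$ in complete generality, where $\widehat{\mathcal{R}} = \lOre{C}{x}{\sigma}/\langle x^n - u^{-1}\rangle$ and $\widehat{\tovector}$ is the coordinate map with respect to the basis $\{1, x, \dots, x^{n-1}\}$. When $u^2 = 1$ we have $u^{-1} = u$, so the defining ideals $\langle x^n - u\rangle$ and $\langle x^n - u^{-1}\rangle$ literally coincide; hence $\widehat{\mathcal{R}} = \mathcal{R}$ as rings, and the two coordinate maps $\widehat{\tovector}$ and $\tovector$ are the very same map since they are built from the same basis of the same left $C$--module. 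This is exactly the content of Remark \ref{u2=1}, which moreover observes that under this identification $\Theta$ becomes an involution on $\mathcal{R}$.

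Given this identification, the rewriting is immediate: substituting $\widehat{\mathcal{R}} = \mathcal{R}$ and $\widehat{\tovector} = \tovector$ into the conclusion of Theorem \ref{dualconstacyclic} turns $\mathcal{C}^\bot = \widehat{\tovector}(\widehat{\mathcal{R}}\Theta(h))$ into $\mathcal{C}^\bot = \tovector(\mathcal{R}\Theta(h))$. So the proof is essentially one line: apply Theorem \ref{dualconstacyclic}, then use $u = u^{-1}$ together with Remark \ref{u2=1} to replace the hatted objects by their unhatted counterparts.

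There is no genuine obstacle here; the only point deserving a word of care is making explicit that $\widehat{\tovector}$ and $\tovector$ agree, not merely that $\widehat{\mathcal{R}}$ and $\mathcal{R}$ are abstractly isomorphic — but this is transparent because both coordinate isomorphisms are defined by the identical prescription $\sum_{i=0}^{n-1} a_i x^i \mapsto (a_0, \dots, a_{n-1})$ on the common underlying set, as already noted in the paragraph preceding Proposition \ref{constatranspose}. Thus the corollary is a direct transcription of Theorem \ref{dualconstacyclic} under the equality $u^{-1} = u$, and the additional remark that $\Theta$ is then an involution is a convenient bonus recorded in Remark \ref{u2=1} but not needed for the statement of the corollary itself.
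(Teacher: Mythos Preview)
Your proposal is correct and matches the paper's approach exactly: the corollary is stated without proof in the paper, being an immediate consequence of Theorem \ref{dualconstacyclic} together with Remark \ref{u2=1}, and your argument spells out precisely that deduction. Your extra care in noting that $\widehat{\tovector} = \tovector$ (and not merely $\widehat{\mathcal{R}} = \mathcal{R}$) is a nice clarification of what the paper leaves implicit.
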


\begin{example}
Cyclic and negacyclic codes over finite chain rings defined in \cite{Dinh/LopezPermouth:2004} fit in our construction. Let \(C\) be a commutative chain ring and \(\mathcal{C}\) a \(C\)-linear code of length \(n\). Then \(\mathcal{C}\) is cyclic if and only if \(\tovector^{-1}(\mathcal{C})\) is an ideal of \(\mathcal{R} = C[x]/\langle x^n-1 \rangle = \widehat{\mathcal{R}}\). The ring \(\mathcal{R}\) is proven to be a principal ideal ring, \cite[Corollary 3.7]{Dinh/LopezPermouth:2004}, and generators \(F\) and \(G\) for a cyclic code \(\mathcal{C}\) and its dual \(\mathcal{C}^\bot\) are computed, \cite[theorems 3.6 and 3.10]{Dinh/LopezPermouth:2004}. Theorem \ref{Dualcyclic} implies that \(\Theta(G)\mathcal{R} = \rann{\mathcal{R}}{\mathcal{R}F}\). The same can be said for negacyclic \(C\)-codes, i.e. ideals of \(C[x]/\langle x^n + 1 \rangle\), if \(F\) and \(G\) are the generators computed in \cite[Theorem 5.7 and 5.12]{Dinh/LopezPermouth:2004} then \(\Theta(G)\) generates the annihilator of \(F\). 
\end{example}

\begin{example}
Since, in Theorem \ref{dualconstacyclic},  $x^n - u$ is a central element of $\lOre{C}{x}{\sigma}$, the equality $fh = x^n-u$ implies, by a standard argument, that $hf = x^n -u$. Now, $x$ is a unit in $\widehat{\mathcal{R}}$, which implies that  $x^k\Theta(h)$ generates $\widehat{\mathcal{R}} \Theta(h)$ for every integer $k$. Setting $k$ equal to the degree of $h$, we get that $x^k\Theta(h)$ is a generator polynomial of the dual code $\mathcal{C}^\bot$ which takes the form of the given in \cite[Corollary 18]{Boucher/Ulmer:2009} for $\sigma$--codes over a finite field. We also get the generator of the dual of any skew constacyclic code computed in \cite[Theorem 4.4]{Boucher/alt:2008}, \cite[Lemma 3.1]{Jitman/alt:2012} and  \cite[Proposition 3]{Ducoat/Oggier:2016} in the realm of codes over finite commutative rings. 
\end{example}

\section{Dual of skew Reed-Solomon codes}\label{DualsRS}

Let $\sigma$ be an automorphism of finite order $n$ of a field $L$.  Since \(L\) is a field, all left and right ideals in the skew left polynomial ring \(\lOre{L}{x}{\sigma}\) are principal. So greatest common left and right divisors and least common left and right multiples exist and they can be computed with the corresponding versions of the extended Euclidean algorithm, see e.g. \cite{Gomez:2014}. Concretely given \(f,g \in \lOre{L}{x}{\sigma}\), the least common left multiple of \(f\) and \(g\), denoted by \(\lclm{f,g}\), is the monic generator of \(\lOre{L}{x}{\sigma} f \cap \lOre{L}{x}{\sigma} g\), and the greatest common right divisor, denoted by \(\gcrd{f,g}\), is the monic generator of \(\lOre{L}{x}{\sigma} f + \lOre{L}{x}{\sigma} g\). Analogously, the least common right multiple is denoted by \(\lcrm{f,g}\) and the greatest common left divisor by \(\gcld{f,g}\). 

The theory developed in Section \ref{sec:constacyclic} can be applied to \(\mathcal{R} = \lOre{L}{x}{\sigma} / \langle x^n-1 \rangle\) with \(u = 1\). In this case, \(\mathcal{R} = \widehat{\mathcal{R}}\) and \((1,\sigma)\)-constacyclic codes are called \(\sigma\)-cyclic codes. Theorem \ref{dualu2=1} says that the dual of a \(\sigma\)-cyclic code is \(\sigma\)-cyclic. In this \(\sigma\)-cyclic setting the morphism \(\Theta\) becomes an involution, i.e.
\[
\Theta : \mathcal{R} \to \mathcal{R}, \qquad \sum_{i = 0}^{n-1}a_ix^i \mapsto \sum_{i = 0}^{n-1}\sigma^{-i}(a_i)x^{-i} = a_0 + \sum_{j=1}^{n-1} \sigma^j(a_{n-j}) x^j,
\]
satisfies \(\Theta^2 = \identity{\mathcal{R}}\). 

Skew Reed-Solomon codes are \(\sigma\)-cyclic codes generated by some special polynomials which we describe below. Let \(\{\alpha, \sigma(\alpha), \dots, \sigma^{n-1}(\alpha)\}\) be a normal basis of \(L/K\) where \(K = L^\sigma\) denotes de invariant subfield. Let \(\beta = \sigma(\alpha) \alpha^{-1}\). By \cite[Lemma 3.1]{GLNPGZ},
\begin{equation}\label{eq:fulldecomposition}
x^n-1 = \lclm{x-\beta, x-\sigma(\beta), \dots, x-\sigma^{n-1}(\beta)}.
\end{equation}
Let \(t\) such that \(2t < n\), \(\delta = 2t+1\). Let 
\[
g = \lclm{x-\beta, x-\sigma(\beta), \dots, x-\sigma^{\delta-2}(\beta)}.
\]
The code \(\mathcal{C} = \tovector(\mathcal{R}g)\) is called a skew RS code of designed distance \(\delta\). If \(k = n - \delta + 1\), \(\mathcal{C}\) is an \([n,k,\delta]\)-code over \(L\), hence it is MDS with respect to the Hamming distance (see \cite[Definition 2, Theorem 3.4]{GLNPGZ}). Skew Reed-Solomon codes can be efficiently decoded, see \cite{GLNPGZ,gln2017sugiyama}

We will prove that the dual of a skew Reed-Solomon code is again a skew Reed-Solomon code, and describe explicitly its generator polynomial as a least common left multiple of linear polynomials. Some preliminary results are needed.  Observe that $\sigma$ extends to an automorphism of $\mathcal{R}$, which acts on $x$  as the identity. 

\begin{lemma}\label{sigmaTheta}
The morphisms \(\Theta\) and \(\sigma\) commute, i.e. \(\Theta \sigma = \sigma \Theta\). 
\end{lemma}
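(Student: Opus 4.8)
The plan is to verify the identity $\Theta\sigma = \sigma\Theta$ directly on a $K$-spanning set of $\mathcal{R}$, namely the monomials $a x^i$ with $a \in L$ and $0 \le i < n$; since both $\Theta$ and $\sigma$ are additive, this suffices. First I would recall that $\sigma$ acts on $\mathcal{R}$ coefficientwise and fixes $x$, so $\sigma(a x^i) = \sigma(a) x^i$. Applying $\Theta$ to this, using the closed form $\Theta(\sum_i a_i x^i) = \sum_i \sigma^{-i}(a_i) x^{-i}$ established just before the lemma, gives $\Theta\sigma(a x^i) = \sigma^{-i}(\sigma(a)) x^{-i} = \sigma^{1-i}(a) x^{-i}$.

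Next I would compute the other composite. We have $\Theta(a x^i) = \sigma^{-i}(a) x^{-i}$, and then applying $\sigma$ coefficientwise (again fixing $x$, hence fixing $x^{-i}$) yields $\sigma\Theta(a x^i) = \sigma(\sigma^{-i}(a)) x^{-i} = \sigma^{1-i}(a) x^{-i}$. Comparing the two expressions shows they agree on every monomial $a x^i$, and by additivity the equality $\Theta\sigma = \sigma\Theta$ holds on all of $\mathcal{R}$.

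There is essentially no obstacle here: the only point requiring a word of care is that $\Theta$ is defined via the reduced representatives of degree less than $n$, but since $\sigma$ preserves degrees and commutes with reduction modulo $x^n - 1$ (because it fixes $x$ and $x^n - 1$ has $\sigma$-invariant coefficients), passing between $\lOre{L}{x}{\sigma}$ and $\mathcal{R}$ causes no trouble. One could alternatively phrase the argument as: both $\Theta$ and $\sigma$ restrict on the coefficient field $L$ to powers of $\sigma$ (which commute among themselves) and both send $x$ to a power of $x$ fixed by $\sigma$, so the two maps commute on generators of $\mathcal{R}$ as a ring — but since $\Theta$ is only an anti-homomorphism, the monomial-by-monomial check above is the cleanest route.
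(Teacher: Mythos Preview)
Your proof is correct and is precisely the ``straightforward computation'' the paper alludes to without writing out; the paper's own proof consists of that single sentence. Your care about reduction modulo $x^n-1$ and the remark on why checking on ring generators is slightly awkward (since $\Theta$ is an anti-homomorphism) are appropriate, though not strictly needed here.
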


\begin{proof}
It is a straightforward computation. 
\end{proof}

Let \(\gamma \in L\) such that $x-\gamma$ left divides $x^n-1$.  Then \((x - \gamma) \mathcal{R}\) is a maximal right ideal, and therefore \(\mathcal{R} \Theta(x - \gamma)\) is a maximal left ideal. Hence there exists \(\gamma' \in L\) such that 

\[
\mathcal{R}(x - \gamma') = \mathcal{R} \Theta(x - \gamma).
\]
In fact, since $\Theta(x-\gamma) = x^{-1} - \gamma$, we get that 
\[
\gamma'  = \sigma(\gamma)^{-1}.
\]

\begin{lemma}\label{fromhtoTheta(h)}
If \(h = \lcrm{\{x - \sigma^i(\gamma)~|~0 \leq i \leq k-1\}}\), then $\mathcal{R}\Theta(h) = \mathcal{R}h'$, where
\[ 
h' = \lclm{\{x - \sigma^i(\gamma')~|~0 \leq i \leq k-1\}}.
\]
\end{lemma}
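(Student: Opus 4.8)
The plan is to pass from the right-multiple expression for $h$ to the left-multiple expression for $h'$ by applying $\Theta$ to the defining intersection/sum of ideals, using that $\Theta$ is an anti-isomorphism which interchanges left and right ideals, together with the already-computed action of $\Theta$ on linear factors, namely $\mathcal{R}\Theta(x-\sigma^i(\gamma)) = \mathcal{R}(x-\sigma^i(\gamma'))$.

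First I would recall that, by definition, $\lcrm{\{x-\sigma^i(\gamma) : 0 \leq i \leq k-1\}}$ is the monic generator of the right ideal $\bigcap_{i=0}^{k-1} (x - \sigma^i(\gamma))\mathcal{R}$. Applying the anti-isomorphism $\Theta : \mathcal{R} \to \mathcal{R}$ turns this right ideal into the left ideal
\[
\Theta\left( \bigcap_{i=0}^{k-1} (x - \sigma^i(\gamma))\mathcal{R} \right) = \bigcap_{i=0}^{k-1} \Theta\big((x - \sigma^i(\gamma))\mathcal{R}\big) = \bigcap_{i=0}^{k-1} \mathcal{R}\,\Theta(x - \sigma^i(\gamma)),
\]
using that $\Theta$ is a bijection (so it commutes with intersection) and that an anti-isomorphism sends $a\mathcal{R}$ to $\mathcal{R}\Theta(a)$. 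Since $\Theta h$ generates $\Theta(h\mathcal{R})$ as a left ideal — again because $\Theta$ is an anti-isomorphism, $\Theta(h\mathcal{R}) = \mathcal{R}\Theta(h)$ — we obtain $\mathcal{R}\Theta(h) = \bigcap_{i=0}^{k-1} \mathcal{R}\,\Theta(x - \sigma^i(\gamma))$.

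Next I would rewrite each factor. We have $\Theta(x - \sigma^i(\gamma)) = x^{-1} - \sigma^i(\gamma)$, and the computation preceding the lemma (for general $\gamma$, the element $\gamma' = \sigma(\gamma)^{-1}$ satisfies $\mathcal{R}(x-\gamma') = \mathcal{R}\Theta(x-\gamma)$) applies verbatim to $\sigma^i(\gamma)$ in place of $\gamma$: its associated element is $\sigma(\sigma^i(\gamma))^{-1} = \sigma(\sigma^i(\gamma))^{-1}$. Here I need the identity $\sigma(\sigma^i(\gamma))^{-1} = \sigma^i(\sigma(\gamma)^{-1}) = \sigma^i(\gamma')$, which is immediate since $\sigma$ is a field automorphism commuting with itself and with inversion. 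Hence $\mathcal{R}\,\Theta(x-\sigma^i(\gamma)) = \mathcal{R}(x - \sigma^i(\gamma'))$ for each $i$, and therefore
\[
\mathcal{R}\Theta(h) = \bigcap_{i=0}^{k-1} \mathcal{R}(x - \sigma^i(\gamma')).
\]
The right-hand side is precisely the left ideal whose monic generator is $\lclm{\{x - \sigma^i(\gamma') : 0 \leq i \leq k-1\}} = h'$, so $\mathcal{R}\Theta(h) = \mathcal{R}h'$, as claimed.

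The one point requiring a little care — the "main obstacle," such as it is — is verifying that the linear polynomials $x - \sigma^i(\gamma')$ genuinely left divide $x^n-1$, so that the $\lclm$ in the statement is legitimate and the identity $\mathcal{R}(x-\sigma^i(\gamma')) = \mathcal{R}\Theta(x-\sigma^i(\gamma))$ is being applied within the hypotheses of the remark preceding the lemma. But this is automatic: since $x - \sigma^i(\gamma)$ right divides $x^n-1$ (as $h$ right divides $x^n-1$, being an $\lcrm$ of right divisors of the central element $x^n-1$), the maximal right ideal $(x-\sigma^i(\gamma))\mathcal{R}$ contains $\langle x^n-1\rangle = 0$ in $\mathcal{R}$, so $\Theta$ carries it to a maximal left ideal containing $0$, i.e. $\mathcal{R}(x-\sigma^i(\gamma'))$ with $x-\sigma^i(\gamma')$ a left divisor of $x^n-1$ in $\lOre{L}{x}{\sigma}$; the formula $\gamma' = \sigma(\gamma)^{-1}$ makes $\gamma' \neq 0$ explicit. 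Everything else is a routine bookkeeping of how an anti-isomorphism interacts with sums, intersections, and one-sided principal ideals.
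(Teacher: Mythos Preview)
Your proof is correct and follows essentially the same route as the paper: express $h\mathcal{R}$ as the intersection $\bigcap_i (x-\sigma^i(\gamma))\mathcal{R}$, push $\Theta$ through the intersection, and identify each $\mathcal{R}\Theta(x-\sigma^i(\gamma))$ as $\mathcal{R}(x-\sigma^i(\gamma'))$. The only cosmetic difference is in this last identification: the paper first writes $(x-\sigma^i(\gamma))\mathcal{R} = \sigma^i((x-\gamma)\mathcal{R})$ and invokes Lemma~\ref{sigmaTheta} (the commutation $\Theta\sigma = \sigma\Theta$) to pull $\sigma^i$ past $\Theta$, whereas you apply the formula $\gamma \mapsto \sigma(\gamma)^{-1}$ directly with $\sigma^i(\gamma)$ in place of $\gamma$ and then simplify $\sigma(\sigma^i(\gamma))^{-1} = \sigma^i(\gamma')$ in $L$. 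Your variant is marginally more self-contained, since it bypasses Lemma~\ref{sigmaTheta} entirely.
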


\begin{proof}
Since \(h = \lcrm{\{x - \sigma^i(\gamma)~|~0 \leq i \leq k-1\}}\),
\[
h \mathcal{R} = \bigcap_{i=0}^{k-1} (x - \sigma^i(\gamma)) \mathcal{R} = \bigcap_{i=0}^{k-1} \sigma^i \left( (x - \gamma) \mathcal{R} \right).
\]
So, by Lemma \ref{sigmaTheta},
\[
\begin{split}
\mathcal{R} \Theta(h) &= \Theta \left( h \mathcal{R} \right) \\
&= \Theta \left( \bigcap_{i=0}^{k-1} \sigma^i \left( (x - \gamma) \mathcal{R} \right) \right) \\
&= \bigcap_{i=0}^{k-1} \Theta \left( \sigma^i \left( (x - \gamma) \mathcal{R} \right) \right) \\
&= \bigcap_{i=0}^{k-1} \sigma^i \left( \Theta \left( (x - \gamma) \mathcal{R} \right) \right) \\
&= \bigcap_{i=0}^{k-1} \sigma^i \left( \mathcal{R} \Theta (x - \gamma) \right) \\
&= \bigcap_{i=0}^{k-1} \sigma^i \left( \mathcal{R} (x - \gamma') \right) \\
&= \bigcap_{i=0}^{k-1} \mathcal{R} (x - \sigma^i (\gamma')) \\
&= \mathcal{R} \lclm{\{x - \sigma^i(\gamma')~|~0 \leq i \leq k-1\}}.
\end{split}
\]
\end{proof}

\begin{lemma}\label{rightleft}
Let \(\gamma \in L\) such that $(x-\gamma) \lclm{x-\sigma(\beta),\ldots,x-\sigma^{n-1}(\beta)}= x^n-1$. Then 
\[
\lcrm{x-\gamma, \ldots , x-\sigma^k(\gamma)} \lclm{x-\sigma^{k+1}(\beta),\ldots,x-\sigma^{n-1}(\beta)}=x^n-1
\]
for all \(0 \leq k \leq n-1\).
\end{lemma}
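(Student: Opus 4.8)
The plan is to argue by induction on $k$. Write $L_i=x-\sigma^i(\beta)$ for $i\in\Nset$ (so $L_{i+n}=L_i$ because $\sigma^n=\identity{L}$), put $g_j=\lclm{L_j,\dots,L_{n-1}}$ for $0\le j\le n$ (so that $g_0=x^n-1$ by \eqref{eq:fulldecomposition}, $g_{n-1}=L_{n-1}$, and $g_n=1$), and set $P_k=\lcrm{x-\gamma,\dots,x-\sigma^k(\gamma)}$. The assertion to prove is then $P_kg_{k+1}=x^n-1$. The base case $k=0$ is exactly the hypothesis, since $\lcrm{x-\gamma}=x-\gamma$ and $g_1=\lclm{x-\sigma(\beta),\dots,x-\sigma^{n-1}(\beta)}$. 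Before the inductive step I would record two elementary facts. First, $\sigma$ extends coefficientwise to an automorphism of $\lOre{L}{x}{\sigma}$ fixing $x$; it fixes $x^n-1$, satisfies $\sigma(L_i)=L_{i+1}$, and sends a left ideal $\lOre{L}{x}{\sigma}f$ to $\lOre{L}{x}{\sigma}\sigma(f)$, hence commutes with the formation of least common left multiples. Applying $\sigma^j$ to $x^n-1=(x-\gamma)g_1$ and using that $\{j+1,\dots,j+n-1\}$ runs over $\{0,\dots,n-1\}\setminus\{j\}$ modulo $n$ gives
\begin{equation*}
x^n-1=(x-\sigma^j(\gamma))\,\lclm{L_i\ :\ 0\le i\le n-1,\ i\ne j},\qquad 0\le j\le n-1.\tag{$\star_j$}
\end{equation*}
Second, $\deg g_j=n-j$ for $0\le j\le n$: the inequality $\le$ is immediate, and $\ge$ follows from $x^n-1=\lclm{L_0,\dots,L_{j-1},g_j}$ together with the degree bound $\deg\lclm{a,b}\le\deg a+\deg b$.

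For the inductive step, assume $P_kg_{k+1}=x^n-1$; then $\deg P_k=n-\deg g_{k+1}=k+1$. Since $g_{k+2}$ right divides $x^n-1$, write $x^n-1=c\,g_{k+2}$, so $\deg c=k+2$. For each $j$ with $0\le j\le k+1$ we have $j\notin\{k+2,\dots,n-1\}$, hence $g_{k+2}$ right divides $\lclm{L_i:i\ne j}$; writing $\lclm{L_i:i\ne j}=w_jg_{k+2}$ and cancelling the nonzero right factor $g_{k+2}$ in $(\star_j)$ against $c\,g_{k+2}$ yields $c=(x-\sigma^j(\gamma))\,w_j$. Thus each $x-\sigma^j(\gamma)$ with $0\le j\le k+1$ left divides $c$, so $P_{k+1}=\lcrm{x-\gamma,\dots,x-\sigma^{k+1}(\gamma)}$, being the monic generator of $\bigcap_{j=0}^{k+1}(x-\sigma^j(\gamma))\lOre{L}{x}{\sigma}$, left divides $c$. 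Since $P_k$ also left divides $P_{k+1}$, this gives $k+1=\deg P_k\le\deg P_{k+1}\le\deg c=k+2$.

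The heart of the argument is to exclude $\deg P_{k+1}=k+1$. If that held, then $P_{k+1}=P_k$ (both monic, of the same degree, with $P_k$ a left divisor of $P_{k+1}$), so $x-\sigma^{k+1}(\gamma)$ left divides $P_k$; writing $P_k=(x-\sigma^{k+1}(\gamma))P'$ and comparing $x^n-1=(x-\sigma^{k+1}(\gamma))P'g_{k+1}$ with $(\star_{k+1})$, left cancellation forces $\lclm{L_i:i\ne k+1}=P'g_{k+1}$, so $g_{k+1}$ right divides $\lclm{L_i:i\ne k+1}$. As $L_{k+1}$ right divides $g_{k+1}$, it then right divides $\lclm{L_i:i\ne k+1}$, whence $\lclm{L_i:i\ne k+1}=\lclm{L_0,\dots,L_{n-1}}=x^n-1$, contradicting $\deg\lclm{L_i:i\ne k+1}\le n-1$. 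Therefore $\deg P_{k+1}=k+2=\deg c$, and a monic left divisor of the monic polynomial $c$ of equal degree must coincide with $c$; hence $P_{k+1}g_{k+2}=c\,g_{k+2}=x^n-1$, completing the induction.

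The only genuine obstacle is this last non-degeneracy step; everything else is degree bookkeeping, and all of it ultimately rests on the single input that the full least common left multiple in \eqref{eq:fulldecomposition} has degree $n$ (equivalently, that the $n$ right factors $L_0,\dots,L_{n-1}$ of $x^n-1$ are "in general position"). I do not expect the field $L$, the particular normal basis, or the value of $\gamma$ to enter the argument beyond what is packaged in \eqref{eq:fulldecomposition}.
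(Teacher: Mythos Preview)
Your proof is correct and follows essentially the same route as the paper's: induction on $k$, the family of identities $(\star_j)$ obtained by applying $\sigma^j$ to the hypothesis, the degree count, and the key non-degeneracy step showing $\deg P_{k+1}=k+2$ via the contradiction that $L_{k+1}$ would otherwise right divide $\lclm{L_i:i\ne k+1}$. The only difference is organizational: to show $P_{k+1}$ left divides the cofactor $c$ of $g_{k+2}$, the paper applies $\sigma$ to the induction hypothesis (getting that $\lcrm{x-\sigma(\gamma),\dots,x-\sigma^{k+1}(\gamma)}$ left divides $c$) and combines this with the base case for $x-\gamma$, whereas you invoke each $(\star_j)$ directly to see that every $x-\sigma^j(\gamma)$ with $0\le j\le k+1$ left divides $c$; your packaging is slightly more uniform but the content is the same.
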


\begin{proof}
Denote $N_i=\lclm{\{x-\sigma^j(\beta)~|~j\neq i\}}$ for each $i=0,\dots , n-1$. 
Then $N_i$ has degree $n-1$ by \eqref{eq:fulldecomposition}. So, there exist  $\gamma_0,\gamma_1,\ldots ,\gamma_{n-1} \in L$ such that
\[
(x-\gamma_i)N_i=x^n-1
\]
for  $i=0,\ldots, n-1$. Let $\gamma=\gamma_0$. Then $(x-\gamma)N_0=x^n-1$, and therefore $(x-\sigma^k(\gamma))\sigma^k(N_0)=x^n-1$ for  $k=1,\ldots ,n-1$.
On the other hand, observe that $\sigma^k(N_0) = N_k$ for any $k=1,\ldots ,n-1$. Thus, for  $k=1,\ldots ,n-1$, $(x-\sigma^k(\gamma))N_k=(x-\gamma_k)N_k$, and then $\gamma_k=\sigma^k(\gamma)$ for all $k=1,\ldots ,n-1$. 

Let us now prove our thesis, i.e. 
\[
\lcrm{x-\gamma, \ldots , x-\sigma^k(\gamma)} \lclm{x-\sigma^{k+1}(\beta),\ldots,x-\sigma^{n-1}(\beta)}=x^n-1
\]
for  $k=0,\ldots ,n-1$, by induction on \(k\). The case $k=0$ is trivial. Assume that it holds for $j\leq k$ and we prove it for $k+1$. Firstly, the polynomial $\lcrm{x-\gamma, \ldots , x-\sigma^{k+1}(\gamma)}$ has degree $k+2$. Indeed, otherwise, by hypothesis,  $\lcrm{x-\gamma, \ldots , x-\sigma^{k}(\gamma)} = \lcrm{x-\gamma, \ldots , x-\sigma^{k+1}(\gamma)}$ and then 
$(x-\sigma^{k+1}(\gamma))p= \lcrm{x-\gamma, \ldots , x-\sigma^{k}(\gamma)}$. So, 
\[
(x-\sigma^{k+1}(\gamma))p \lclm{x-\sigma^{k+1}(\beta),\ldots,x-\sigma^{n-1}(\beta)} = x^n-1.
\]
In particular, this implies that 
\[
p \lclm{x-\sigma^{k+1}(\beta),\ldots,x-\sigma^{n-1}(\beta)} = N_{k+1}
\]
and, consequently, $x-\sigma^{k+1}(\beta)$ right divides $N_{k+1}$, which contradicts 
\eqref{eq:fulldecomposition}.

Now, by hypothesis, 
\[
\lcrm{x-\gamma, \ldots , x-\sigma^k(\gamma)} \lclm{x-\sigma^{k+1}(\beta),\ldots,x-\sigma^{n-1}(\beta)} = x^n-1
\]
so, applying $\sigma$,
\[
\lcrm{x-\sigma(\gamma), \ldots , x-\sigma^{k+1}(\gamma)} \lclm{x-\sigma^{k+2}(\beta),\ldots,x-\sigma^{n-1}(\beta),x-\beta} = x^n-1
\]
so
\[
\lcrm{x-\sigma(\gamma), \ldots , x-\sigma^{k+1}(\gamma)} q \lclm{x-\sigma^{k+2}(\beta),\ldots,x-\sigma^{n-1}(\beta)} = x^n-1
\]
for some polynomial $q$. That is, if $h$ is the monic polynomial verifying that
\[
h \lclm{x-\sigma^{k+2}(\beta),\ldots,x-\sigma^{n-1}(\beta)} = x^n-1,
\]
then $\lcrm{x-\sigma(\gamma), \ldots , x-\sigma^{k+1}(\gamma)}$ left divides $h$. On the other hand, the case $k=0$ provides that
\[
(x-\gamma) \lclm{x-\sigma(\beta),\ldots,x-\sigma^{n-1}(\beta)} = x^n-1
\] 
and therefore
$x-\gamma$ left divides $h$ as well. So $\lcrm{x-\gamma, x-\sigma(\gamma), \ldots , x-\sigma^{k+1}(\gamma)}$ left divides $h$. Since both have degree $k+2$, the result follows.
\end{proof}

\begin{theorem}\label{sRSdual}
Consider a skew RS code \(\mathcal{C} = \tovector(\R g)\), where 
\[
g = \lclm{x-\beta, x-\sigma(\beta), \dots, x-\sigma^{\delta-2}(\beta)}.
\]
If \(\gamma \in L\) is such that 
\[ (x-\gamma) \lclm{x-\sigma(\beta),\ldots,x-\sigma^{n-1}(\beta)}= x^n-1,
\] then $\mathcal{C}^\bot$ is the skew RS code generated by
\[
 \lclm{x-\sigma^{\delta}(\gamma)^{-1},  \dots, x-\sigma^{n}(\gamma)^{-1}}.
\]

\end{theorem}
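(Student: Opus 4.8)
The plan is to recognize $\mathcal{C}$ as a $(1,\sigma)$--constacyclic code and apply Theorem~\ref{dualconstacyclic} with $u=1$ (so $\widehat{\R}=\R$ and $\Theta$ is an involution, cf.\ Corollary~\ref{dualu2=1}). This reduces the theorem to two tasks: (a)~identifying the monic cofactor $h$ of $g$ in $x^n-1$, and (b)~rewriting $\Theta(h)$ as a left--lcm of linear polynomials by means of Lemma~\ref{fromhtoTheta(h)}.

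For (a), I would first note that $g=\lclm{x-\beta,\dots,x-\sigma^{\delta-2}(\beta)}$ right divides $x^n-1$: indeed, $\lOre{L}{x}{\sigma}(x^n-1)=\bigcap_{i=0}^{n-1}\lOre{L}{x}{\sigma}(x-\sigma^i(\beta))\subseteq\bigcap_{i=0}^{\delta-2}\lOre{L}{x}{\sigma}(x-\sigma^i(\beta))=\lOre{L}{x}{\sigma}g$ by \eqref{eq:fulldecomposition}. Write $x^n-1=hg$ with $h$ monic. To pin down $h$ I would invoke Lemma~\ref{rightleft} with $k=n-\delta$ and then apply the automorphism $\sigma^{\delta-1}$ of $\lOre{L}{x}{\sigma}$ (which fixes $x$, hence fixes $x^n-1$ and commutes with $\lclm{\cdot}$ and $\lcrm{\cdot}$); reducing exponents modulo $n$, the second factor becomes exactly $g$, so $\lcrm{x-\sigma^{\delta-1}(\gamma),\dots,x-\sigma^{n-1}(\gamma)}\,g=x^n-1$, whence $h=\lcrm{x-\sigma^{\delta-1}(\gamma),\dots,x-\sigma^{n-1}(\gamma)}$ and, $x^n-1$ being central, also $gh=x^n-1$. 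Then Theorem~\ref{dualconstacyclic} together with Corollary~\ref{dualu2=1} gives $\mathcal{C}^\bot=\tovector(\R\,\Theta(h))$.

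For (b), set $\gamma_1=\sigma^{\delta-1}(\gamma)$, so $h=\lcrm{\{x-\sigma^i(\gamma_1):0\le i\le n-\delta\}}$, and note that $x-\gamma_1$ left divides $x^n-1$ (apply $\sigma^{\delta-1}$ to the hypothesis). Lemma~\ref{fromhtoTheta(h)}, applied to $\gamma_1$ with $k=n-\delta+1$, gives $\R\,\Theta(h)=\R\,h'$ with $h'=\lclm{\{x-\sigma^i(\gamma_1'):0\le i\le n-\delta\}}$ and $\gamma_1'=\sigma(\gamma_1)^{-1}=\sigma^{\delta}(\gamma)^{-1}$. Since $\sigma^i(\gamma_1')=\sigma^{\delta+i}(\gamma)^{-1}$ and $\delta+(n-\delta)=n$, this is exactly $h'=\lclm{x-\sigma^\delta(\gamma)^{-1},\dots,x-\sigma^{n}(\gamma)^{-1}}$, so $\mathcal{C}^\bot=\tovector(\R\,h')$. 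Finally $h'=\lclm{x-\mu,x-\sigma(\mu),\dots,x-\sigma^{n-\delta}(\mu)}$ with $\mu=\sigma^\delta(\gamma)^{-1}$, a left--lcm of $n-\delta+1$ consecutive $\sigma$--powers of $\mu$; to conclude that $\mathcal{C}^\bot$ is a skew RS code I must check that $\mu$ --- equivalently, by a $\sigma$--shift, that $\gamma'=\sigma(\gamma)^{-1}$ --- equals $\sigma(\alpha')(\alpha')^{-1}$ for a generator $\alpha'$ of a normal basis of $L/K$. For that I would take $k=n-1$ in Lemma~\ref{rightleft} to get $\lcrm{x-\gamma,\dots,x-\sigma^{n-1}(\gamma)}=x^n-1$, i.e.\ $\bigcap_{i=0}^{n-1}(x-\sigma^i(\gamma))\R=0$ in $\R$, and then apply the anti-automorphism $\Theta$, using (as in the paragraph preceding Lemma~\ref{fromhtoTheta(h)}) that $\Theta\big((x-\sigma^i(\gamma))\R\big)=\R(x-\sigma^i(\gamma'))$; this gives $\bigcap_{i=0}^{n-1}\R(x-\sigma^i(\gamma'))=0$, i.e.\ $\lclm{x-\gamma',\dots,x-\sigma^{n-1}(\gamma')}=x^n-1$, and the converse of \eqref{eq:fulldecomposition} then forces $\gamma'$ into the required form, so that $h'$ is a skew RS generator and $\mathcal{C}^\bot$ is the asserted code.

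The main obstacle is precisely this last point, the converse of \eqref{eq:fulldecomposition}: that a scalar $c$ with $\lclm{x-c,\dots,x-\sigma^{n-1}(c)}=x^n-1$ is necessarily $\sigma(\alpha')(\alpha')^{-1}$ for some generator $\alpha'$ of a normal basis of $L/K$. I expect this to be quotable from \cite{GLNPGZ} as the natural companion of their Lemma~3.1; failing that, one proves it through the split cyclic algebra $\R\cong(L/K,\sigma,1)\cong\matrixring{n}{K}$, in which the maximal left ideals $\R(x-\sigma^i(\gamma'))$ are the annihilators of nonzero vectors of $K^n$, the vanishing of their intersection forces those $n$ vectors to span $K^n$, and the $K$--linear independence of $\{\sigma^i(\alpha')\}$ follows. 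Everything else --- the bookkeeping of exponents and the evaluation of $\Theta$ on linear polynomials --- is mechanical.
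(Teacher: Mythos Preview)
Your argument in parts (a) and (b) is exactly the paper's proof: invoke Lemma~\ref{rightleft} with $k=n-\delta$, shift by $\sigma^{\delta-1}$ to identify $h=\lcrm{x-\sigma^{\delta-1}(\gamma),\dots,x-\sigma^{n-1}(\gamma)}$ as the cofactor of $g$, apply Corollary~\ref{dualu2=1} to get $\mathcal{C}^\bot=\tovector(\mathcal{R}\Theta(h))$, and then use Lemma~\ref{fromhtoTheta(h)} (with base point $\sigma^{\delta-1}(\gamma)$) to rewrite $\mathcal{R}\Theta(h)$ as generated by the stated left lcm. That is precisely what the paper does, and your bookkeeping of exponents is correct.

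Where you go further than the paper is in your final paragraph: you insist on verifying that the generator of $\mathcal{C}^\bot$ is of ``skew RS type'', i.e.\ that $\mu=\sigma^\delta(\gamma)^{-1}$ arises as $\sigma(\alpha')(\alpha')^{-1}$ for some normal-basis generator $\alpha'$. The paper's proof simply does not address this; it computes the generator and asserts the conclusion. Your proposed route (take $k=n-1$ in Lemma~\ref{rightleft}, apply $\Theta$ to obtain $\lclm{x-\gamma',\dots,x-\sigma^{n-1}(\gamma')}=x^n-1$, and then invoke the converse of \cite[Lemma~3.1]{GLNPGZ}) is sound and fills that gap; the cyclic-algebra argument you sketch as a fallback also works. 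So your proposal is correct and matches the paper's approach, while being more scrupulous about the qualifier ``skew RS'' than the paper itself.
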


\begin{proof}
By Lemma \ref{rightleft}
\[
\lcrm{x-\gamma, \ldots , x-\sigma^{n-\delta}(\gamma)} \lclm{x-\sigma^{n-\delta+1}(\beta),\ldots,x-\sigma^{n-1}(\beta)}=x^n-1,
\]
and, by applying \(\sigma^{\delta-1}\), we obtain
\[
\lcrm{x-\sigma^{\delta-1}(\gamma), \ldots , x-\sigma^{n-1}(\gamma)} \lclm{x-\beta,\ldots,x-\sigma^{\delta-2}(\beta)}=x^n-1.
\]
Therefore, $hg = x^n-1$, where $h = \lcrm{x-\sigma^{\delta-1}(\gamma), \ldots , x-\sigma^{n-1}(\gamma)}$. A standard argument, which uses that $x^n-1$ is central, proves that $gh = x^n -1$. 

By Corollary \ref{dualu2=1}, we get that $\mathcal{C}^\bot = \mathcal{R}\Theta(h)$. Lemma \ref{fromhtoTheta(h)} gives then that $\mathcal{R}\Theta(h)$ is generated by
\[
\lclm{x-\sigma(\sigma^{\delta -1}(\gamma))^{-1}, \ldots, \sigma^{n-\delta+1}(\sigma^{\delta-1}(\gamma))^{-1}},
\]
which finishes the proof. 
\end{proof}

We finish by proving that, as a consequence of Theorem \ref{sRSdual}, skew RS codes can be seen as evaluation codes. The right evaluation of a skew polynomial \(f = \sum_i f_i x^i \in \lOre{L}{x}{\sigma}\) by \(a \in L\) is the remainder of the right division of \(f\) by \(x - a\), i.e. the unique element \(f(a) \in L\) such that \(f(x) = q(x) (x-a) + f(a)\). Then
\[
f(a) = \sum_i f_i \norm{i}{a}
\]
where
\[
\norm{i}{a} = a \sigma(a) \dots \sigma^{i-1}(a).
\]

\begin{definition}
Let \(\overline{\alpha} = (\alpha_0, \alpha_1, \dots, \alpha_{m-1}) \in L^m\) and \(\overline{v} = (v_0, v_1, \dots, v_{m-1}) \in (L\setminus\{0\})^m\). The skew Generalized Evaluation code associated to \((\overline{\alpha},\overline{v})\) is 
\[
\operatorname{sGE}_{(\overline{\alpha},\overline{v})} = \left\{ (v_0 f(\alpha_0), \dots, v_{m-1} f(\alpha_{m-1})) ~|~ f \in \lOre{L}{x}{\sigma}, \deg f < k \right\}.
\]
\end{definition}
It is straightforward to check that \(\operatorname{sGE}_{(\overline{\alpha},\overline{v})}\) is an \(L\)--linear code. In fact a generator matrix for it is
\begin{multline*}
\left(\begin{matrix}
v_0 & v_1 & \cdots & v_{m-1} \\
v_0 \alpha_0 & v_1 \alpha_1 & \cdots & v_{m-1} \alpha_{m-1} \\
\vdots & \vdots & \ddots & \vdots \\
v_0 \norm{k-1}{\alpha_0} & v_1 \norm{k-1}{\alpha_1} & \cdots & v_{n-1}\norm{k-1}{\alpha_{m-1}}
\end{matrix}\right) \\
= 
\left(\begin{matrix}
1 & 1 & \cdots & 1 \\
\alpha_0 & \alpha_1 & \cdots & \alpha_{m-1} \\
\vdots & \vdots & \ddots & \vdots \\
\norm{k-1}{\alpha_0} & \norm{k-1}{\alpha_1} & \cdots & \norm{k-1}{\alpha_{m-1}}
\end{matrix}\right) 
\left(\begin{matrix}
v_0 & 0 & \cdots & 0 \\
0 & v_1 & \cdots & 0 \\
\vdots & \vdots & \ddots & \vdots \\
0 & 0 & \cdots & v_{m-1}
\end{matrix}\right)
\end{multline*}

This definition is an extension of \cite[Definition 9]{Liu/etal:2015} to arbitrary fields, including therefore the convolutional case when \(L = \field{}(t)\). A different approach to evaluation codes, where the norms are replaced by powers of the automorphism, can be found in \cite{Augot/etal:2013,Augot/etal:2017}.

Let us now prove that a skew RS code is also a sGE code. 

\begin{theorem}\label{sRSevaluation}
Let \(\mathcal{C} = \tovector(\mathcal{R} g)\) be a skew RS code where
\[
g = \lclm{x-\beta, \dots, x - \sigma^{\delta-2}(\beta)}.
\]
Then there exist \(\mu, \nu \in L\setminus\{0\}\) such that 
\(
\mathcal{C} = \operatorname{sGE}_{(\overline{\mu},\overline{\nu})},
\)
where \(\overline{\mu} = (\mu, \sigma(\mu), \dots, \sigma^{n-1}(\mu))\) and \(\overline{\nu} = (\nu, \sigma(\nu), \dots, \sigma^{n-1}(\nu))\).
\end{theorem}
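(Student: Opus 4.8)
The plan is to exploit the duality already established in Theorem~\ref{sRSdual} together with the well-known fact that the dual of a (generalized) Reed–Solomon–type evaluation code is again an evaluation code of the complementary dimension, with explicitly related multipliers. Concretely, I would proceed as follows. First, I would observe that a skew RS code of designed distance $\delta$, being generated by $g = \lclm{x-\beta,\dots,x-\sigma^{\delta-2}(\beta)}$, is the left ideal $\mathcal{R}g$, and hence as a $C$-linear (here $L$-linear) code equals $\operatorname{im}(\cdot M_{\mathcal{R}}(g))$. The crucial structural input is that $x^n-1 = \lclm{x-\beta,\dots,x-\sigma^{n-1}(\beta)}$ from \eqref{eq:fulldecomposition}; this lets me identify $\mathcal{R}g$ with the set of classes of polynomials $f$ of degree $<n$ that are right divisible by $g$ modulo $x^n-1$, equivalently by a dimension count, with a space of dimension $k = n-\delta+1$.

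Next I would set up the right-evaluation interpretation directly. Since $g$ is the monic lclm of the $x-\sigma^i(\beta)$ for $0\le i\le \delta-2$, a class $\bar c \in \mathcal{R}$ of a polynomial $c$ with $\deg c<n$ lies in $\mathcal{R}g$ if and only if $(x-\sigma^i(\beta))$ left divides $c$ modulo $x^n-1$ — but since $\deg c < n$ and these linear factors left-divide $x^n-1$, this is equivalent to the honest condition that $(x-\sigma^i(\beta))$ left divides $c$ in $\lOre{L}{x}{\sigma}$ for each $i$. By the left-division remainder theorem (the mirror of the right evaluation $f(a)$ described just before the statement), left divisibility of $c$ by $x-\sigma^i(\beta)$ is a single linear condition on the coefficients of $c$, namely a "left evaluation" of $c$ at $\sigma^i(\beta)$ vanishing. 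Thus $\mathcal{R}g$ is cut out by $\delta-1$ linear functionals, i.e. it is the annihilator of the $L$-span of $\delta-1$ explicit vectors built from $\beta,\sigma(\beta),\dots$ Comparing with the generator matrix displayed after Definition~\ref{sGE...} (a Moore-type matrix with rows $\norm{j}{\alpha_i}$ times multipliers $v_i$), I would recognize this parity description as exactly the parity-check shape of an $\operatorname{sGE}$ code: the span of the rows $(\norm{j}{\alpha_i})$ for $0\le j\le \delta-2$, suitably weighted, is the \emph{dual} of an $\operatorname{sGE}$ code with evaluation points $\alpha_i$ and some column multipliers.

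Here is where I would bring in Theorem~\ref{sRSdual}: it tells us $\mathcal{C}^\bot$ is again a skew RS code, explicitly generated by $\lclm{x-\sigma^{\delta}(\gamma)^{-1},\dots,x-\sigma^{n}(\gamma)^{-1}}$, where $\gamma$ is determined by $(x-\gamma)\lclm{x-\sigma(\beta),\dots,x-\sigma^{n-1}(\beta)}=x^n-1$. Since $\gamma$ is built from the same normal-basis data ($\beta = \sigma(\alpha)\alpha^{-1}$), the generators $\sigma^i(\gamma)^{-1}$ are again of the form $\sigma^i(\rho)$ for a single $\rho = \sigma^{\delta}(\gamma)^{-1}\cdot(\text{adjustment})$; concretely one checks $\sigma^{i}(\gamma)^{-1} = \sigma^{i}(\gamma^{-1})$, so $\mathcal{C}^\bot$ is itself a skew RS code with "base" $\gamma^{-1}$ in place of $\beta$. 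Now I can run the \emph{same} parity-check-to-evaluation dictionary once more, but on $\mathcal{C}^\bot$: writing $\mathcal{C}^\bot$ as the $L$-span of the rows $(\norm{j}{\sigma^i(\gamma^{-1})})$ weighted appropriately, one obtains that $(\mathcal{C}^\bot)^\bot = \mathcal{C}$ is the evaluation code $\operatorname{sGE}_{(\overline{\mu},\overline{\nu})}$ whose evaluation points are precisely the $\sigma^i(\mu)$ with $\mu$ related to $\gamma$ (explicitly $\mu = \sigma^{-1}(\gamma)$, up to bookkeeping of indices) and whose column multipliers $\sigma^i(\nu)$ absorb the normalization constants arising from converting between the lclm-generator description and the Moore-matrix description. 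The existence of $\mu,\nu$ with the stated $\sigma$-orbit form is then immediate because every quantity in sight is $\sigma$-equivariant: applying $\sigma$ to the defining relation for $\gamma$ permutes the linear factors cyclically, so the whole construction is invariant under simultaneous $\sigma$-shift of all data.

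The main obstacle I anticipate is the bookkeeping in the last step: precisely pinning down $\mu$ and $\nu$ and verifying that the normalization factors relating "$c$ divisible on the left by $x-\sigma^i(\gamma^{-1})$" to "$\sum_j c_j \norm{j}{\sigma^i(\mu)} = 0$" can be collected into a single column-scaling vector of the $\sigma$-orbit form $(\nu,\sigma(\nu),\dots,\sigma^{n-1}(\nu))$. This is essentially a computation with norms $\norm{i}{a}=a\sigma(a)\cdots\sigma^{i-1}(a)$ and the identity $\norm{i}{\sigma(a)} = \sigma(\norm{i}{a})$, plus the relation $\norm{n}{\beta}=1$ coming from $\beta=\sigma(\alpha)\alpha^{-1}$ and $\sigma^n=\mathrm{id}$; I would isolate it as a short lemma on skew-Vandermonde (Moore) matrices rather than inlining it. Everything else — that $\mathcal{C}$ has the right dimension $k=n-\delta+1$, that the parity conditions are independent, that $\mathcal{C}^{\bot\bot}=\mathcal{C}$ over a field — is either already in the excerpt (Remark~\ref{pidFrobenius}, Theorem~\ref{sRSdual}) or a standard dimension count.
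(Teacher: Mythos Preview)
Your overall strategy coincides with the paper's: invoke Theorem~\ref{sRSdual} to realize $\mathcal{C}^\bot$ as a skew RS code $\mathcal{R}g'$ with $g'=\lclm{x-\mu,\dots,x-\sigma^{n-\delta}(\mu)}$, read off from the right-root conditions $c(\sigma^i(\mu))=0$ a parity check matrix for $\mathcal{C}^\bot$ (equivalently, a generator matrix for $\mathcal{C}=(\mathcal{C}^\bot)^\bot$), and then recognize that matrix as the generator of an $\operatorname{sGE}$ code. Two issues, however, separate your sketch from a complete proof.

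First, a terminological slip: membership of $c$ in $\mathcal{R}g$ is equivalent to each $x-\sigma^i(\beta)$ \emph{right} dividing $c$ (since $g$ is the least common \emph{left} multiple, each linear factor right divides $g$, hence right divides $c=qg$), and right divisibility by $x-a$ is exactly the right-evaluation condition $c(a)=0$. Your text says ``left divides'' and appeals to a ``left-division remainder theorem'', which is the wrong side; the correct statement is the one already displayed in the paper just before the theorem.

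Second, and more substantively, the step you flag as ``bookkeeping'' is where the actual content lies, and your sketch does not supply it. Knowing that the parity check matrix of $\mathcal{C}^\bot$ is the Moore matrix $\big(\norm{j}{\sigma^i(\mu)}\big)$ gives you a generator matrix for $\mathcal{C}$ only after transposition, and the transposed matrix is \emph{not} visibly of $\operatorname{sGE}$ shape: you need the column multipliers to form a $\sigma$-orbit $(\nu,\sigma(\nu),\dots,\sigma^{n-1}(\nu))$, not an arbitrary vector. The paper obtains this by observing that $\norm{n}{\mu}=1$ (since $x-\mu$ right divides $x^n-1$) and then invoking Hilbert's Theorem~90 to produce $\nu$ with $\mu=\sigma(\nu)\nu^{-1}$. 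This identity yields $\norm{j}{\sigma^i(\mu)}=\sigma^{i+j}(\nu)\,\sigma^i(\nu)^{-1}$, so after scaling column $i$ by $\sigma^i(\nu)$ the parity check matrix becomes the ``Hankel-type'' array $H=\big(\sigma^{i+j}(\nu)\big)$; its transpose then refactors as a Moore matrix in $\mu$ times $\operatorname{diag}(\nu,\sigma(\nu),\dots,\sigma^{n-1}(\nu))$, which is precisely the $\operatorname{sGE}_{(\overline{\mu},\overline{\nu})}$ generator. Your proposal mentions $\norm{n}{\beta}=1$ but not $\norm{n}{\mu}=1$, and never names Hilbert~90; without that step there is no reason the multipliers should lie on a single $\sigma$-orbit, so this is the genuine missing idea rather than bookkeeping.
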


\begin{proof}
By Theorem \ref{sRSdual} there exists \(\mu \in L\) such that \(\mathcal{C}^\bot = \tovector(\mathcal{R} g')\) where 
\[
g' = \lclm{x - \mu, \dots, x - \sigma^{n-\delta}(\mu)}.
\]
Since \(\sigma^{i}(\mu)\) is a right root of \(g'\) for \(0 \leq i \leq n-\delta\), it follows that a parity check matrix of \(\mathcal{C}^\bot\) is 
\[
\left(\begin{matrix}
1 & 1 & \dots & 1 \\
\mu & \sigma(\mu) & \dots & \sigma^{n-\delta}(\mu) \\
\vdots & \vdots & \ddots & \vdots \\
\norm{n-1}{\mu} & \norm{n-1}{\sigma(\mu)} & \dots & \norm{n-1}{\sigma^{n-\delta}(\mu)}
\end{matrix}\right).
\]
Since \(\mu\) is also a left root of \(x^n-1\), it follows that \(\norm{n}{\mu} = 1\) and, by Hilbert's 90 Theorem, there exists \(\nu \in L\) such that \(\mu = \sigma(\nu)\nu^{-1}\). So, up to multiply each column by the corresponding scalar, a new parity check matrix for \(\mathcal{C}^\bot\) is 
\[
H = \left(\begin{matrix}
\nu & \sigma(\nu) & \cdots & \sigma^{n-\delta}(\nu) \\
\sigma(\nu) & \sigma^2(\nu) & \cdots & \sigma^{n-\delta+1}(\nu) \\
\vdots & \vdots & \ddots & \vdots \\
\sigma^{n-1}(\nu) & \nu & \cdots & \sigma^{n-\delta-1}(\nu)
\end{matrix}\right).
\] 
Therefore \(\mathcal{C}\) is generated by the rows of the matrix
\[
\transpose{H} = \left(\begin{matrix}
\nu & \sigma(\nu) & \cdots & \sigma^{n-1}(\nu) \\
\sigma(\nu) & \sigma^2(\nu) & \cdots & \nu \\
\vdots & \vdots & \ddots & \vdots \\
\sigma^{n-\delta}(\nu) & \sigma^{n-\delta+1}(\nu) & \cdots & \sigma^{n-\delta-1}(\nu)
\end{matrix}\right).
\]
We have
\[
\begin{split}
\transpose{H} &= \left(\begin{matrix}
\nu & \sigma(\nu) & \cdots & \sigma^{n-1}(\nu) \\
\sigma(\nu) & \sigma^2(\nu) & \cdots & \nu \\
\vdots & \vdots & \ddots & \vdots \\
\sigma^{n-\delta}(\nu) & \sigma^{n-\delta+1}(\nu) & \cdots & \sigma^{n-\delta-1}(\nu)
\end{matrix}\right) \\
&= \left(\begin{matrix}
1 & 1 & \cdots & 1 \\
\sigma(\nu)\nu^{-1} & \sigma^2(\nu)\sigma(\nu)^{-1} & \cdots & \nu \sigma^{n-1}(\nu)^{-1} \\
\vdots & \vdots & \ddots & \vdots \\
\sigma^{n-\delta}(\nu)\nu^{-1} & \sigma^{n-\delta+1}(\nu)\sigma(\nu)^{-1} & \cdots & \sigma^{n-\delta-1}(\nu)\sigma^{n-1}(\nu)^{-1}
\end{matrix}\right) \\
&\qquad \cdot
\left(\begin{matrix}
\nu & 0 & \cdots & 0 \\
0 & \sigma(\nu) & \cdots & 0 \\
\vdots & \vdots & \ddots & \vdots \\
0 & 0 & \cdots & \sigma^{n-1}(\nu)
\end{matrix}\right) \\
&= \left(\begin{matrix}
1 & 1 & \cdots & 1 \\
\mu & \sigma(\mu) & \cdots & \sigma^{n-1}(\mu) \\
\vdots & \vdots & \ddots & \vdots \\
\norm{n-\delta}{\mu} & \norm{n-\delta}{\sigma(\mu)} & \cdots & \norm{n-\delta}{\sigma^{n-1}(\mu)}
\end{matrix}\right) \\
&\qquad \cdot
\left(\begin{matrix}
\nu & 0 & \cdots & 0 \\
0 & \sigma(\nu) & \cdots & 0 \\
\vdots & \vdots & \ddots & \vdots \\
0 & 0 & \cdots & \sigma^{n-1}(\nu)
\end{matrix}\right),
\end{split}
\]
hence \(\mathcal{C} = \operatorname{sGE}_{(\overline{\gamma},\overline{\nu})}.\)
\end{proof}

\bibliographystyle{ams}

\end{document}